\numberwithin{equation}{section} % numbers equations by sections
\newtheorem{theorem}{Theorem}%[section]
\newtheorem{remark}{Remark}%[section]
\newtheorem{lemma}{Lemma}
\newtheorem{proposition}{Proposition}%[section]
\newtheorem{example}{Example}
\newtheorem{corollary}{Corollary}%[section]
\newtheorem{definition}{Definition}%[section]
\newcommand{\bydef}{\stackrel{\Delta}{=}}
\newfont{\Blackboard}{msbm10 scaled 1100}
\newcommand{\bl}[1]{\mbox{\Blackboard #1}}
\newenvironment{proof}{{\it Proof :~}}{\hfill$\Box$\\}
\begin{document}

\begin{frontmatter}
%\runtitle{Insert a suggested running title}  % Running title for regular
                                              % papers but only if the title
                                              % is over 5 words. Running title
                                              % is not shown in output.

\title{ \Large {\bf New Stability and Exact Observability Conditions for
 Semilinear Wave Equations}}                  % Title, preferably not more
                                                % than 10 words.

\thanks[footnoteinfo]{This work was supported by  Israel Science Foundation (grant no. 1128/14).}

%\author{ Emilia Fridman $^{1}$ and Maria Terushkin $^{1}$
%\thanks{*This work was supported by  Israel Science Foundation (grant no. 1128/14).}
%\thanks{{$^{1}$
% School of Electrical Engineering, Tel-Aviv University, Tel-Aviv
%69978, Israel. {\tt\small E-mail: emilia@eng.tau.ac.il; marinio@gmail.com}}}}
\author[a]{Emilia Fridman}\ead{emilia@eng.tau.ac.il},       % Add the
\author[a]{Maria Terushkin}\ead{marinio@gmail.com}               % e-mail address
%\author[Baiae]{Publius Maro Vergilius}\ead{vergilius@culture.ir}  % (ead) as shown

\address[a]{School of Electrical Engineering, Tel-Aviv University, Tel-Aviv
69978, Israel.}  % Please supply
%\address[b]{School of Electrical Engineering, Tel-Aviv University, Tel-Aviv
%69978, Israel.}             % full addresses
%\address[Baiae]{The White House, Baiae}        % here.

\begin{keyword}                           % Five to ten keywords,
Distributed parameter systems; wave equation; Lyapunov method; LMIs; exact observability.    % chosen from the IFAC
\end{keyword}                             % keyword list or with the
                                          % help of the Automatica
                                          % keyword wizard
%\begin{document}

\date{}
\maketitle %\thispagestyle{empty} \pagestyle{empty}

\begin{abstract}
    The problem of estimating the initial state of  1-D wave equations with globally Lipschitz nonlinearities  from boundary measurements on a finite interval
was solved recently  by using the sequence of forward and backward observers,
and deriving the upper bound for exact observability time in terms of Linear Matrix Inequalities (LMIs) \cite{fridman2013observers}.
  In the present paper, we generalize this result  to n-D  wave equations on a hypercube. This extension
   includes  new LMI-based exponential stability  conditions for n-D wave equations, as well as
 an upper bound on the minimum exact  observability time in terms of LMIs.
 For 1-D wave equations  with locally Lipschitz nonlinearities,
we  find an estimate on the region of initial conditions that are guaranteed to be
  uniquely recovered from the measurements.
The efficiency of the results is illustrated by  numerical examples.

\end{abstract}

\end{frontmatter}
%\begin{keywords} Distributed parameter systems, wave equations, Lyapunov method, LMIs, exact observablity
%\end{keywords}
\section{Introduction}

Lyapunov-based solutions of various control problems  for
finite-dimensional % and (infinite-dimensional)
% time-delay
systems
 can be formulated
in the form of Linear Matrix Inequalities (LMIs) \cite{Boyd}. %, Fridman14_TDS, Gu}.
  The LMI
approach to distributed parameter systems
is capable of utilizing nonlinearities and
 of providing the desired system
{ performance}  (see e.g.
\cite{castillo2012dynamic,Aut09a,lamarelyapunov}). %(a decay rate for the exponential stability).
For 1-D wave equations, several control problems were solved
by using the direct Lyapunov method
in terms of LMIs \cite{Aut09b,fridman2013observers}.
However, there have not been yet LMI-based results for n-D wave equations,
though the exponential stability of the n-D wave equations in bounded
  spatial domains
   has been studied in the literature
   via the direct Lyapunov method (see e.g.
   \cite{Zuazua90,BZGuo,ammari2010,fridman2010stabilization}).
%, which is one of the objectives of
%the present paper.
%An extension of LMI-based conditions to n-D wave equations is one of the objectives of
%the present paper.

The problem of estimating the initial state of  1-D wave equations with globally Lipschitz nonlinearities  from boundary measurements on a finite interval
was solved recently  by using the sequence of forward and backward observers,
and deriving the upper bound for exact observability time in terms of LMIs \cite{fridman2013observers}.
  In the present paper, we generalize this result  to n-D  wave equations on a hypercube. This extension
   includes  new LMI-based exponential stability  conditions for n-D wave equations. %, as well as
% an upper bound on the minimum exact  observability time in terms of LMIs.
 %Though the exponential stability of the N-D wave equations (in general
%  spatial domains)
%   has been studied in the literature
%   via the direct Lyapunov method (see e.g.
%   \cite{Zuazua90,BZGuo,fridman2010stabilization}), these  are the first LMI conditions.
Their derivation is based on n-D extensions of  the Wirtinger (Poincare) inequality \cite{Hardy} and of the Sobolev inequality with tight constants, which
  is crucial for
   the efficiency of the results.
   As in 1-D case,
   the continuous dependence of the reconstructed
     initial state on the measurements follows from the integral input-to-state stability
     of the corresponding error system,
      which is guaranteed by the LMIs for the exponential stability.
      Some preliminary results on global exact observability of multidimensional wave PDEs
will be presented in \cite{MashaCDC15}.

Another objective of the present paper
is to study regional exact observability for systems
with locally Lipschitz in the state nonlinearities. Here we restrict our consideration to 1-D case,
and
  find an estimate on the region of initial conditions that are guaranteed to be
  uniquely recovered from the measurements.
  Note that our result on the regional observability
 cannot be extended to multi-dimensional case (see Remark \ref{rem_reg} below for explanation and for discussion on possible n-D extensions for different classes of nonlinearities).
The efficiency of the results is illustrated by  numerical examples.

The presented simple finite-dimensional LMI conditions
complete  the theoretical qualitative results of e.g. \cite{GeorgeAut10} (where exact observability of linear systems in a Hilbert space was studied via
 a sequence of forward and backward observers)
and \cite{baroun2013semilinear} (where local exact observability of abstract semilinear systems
was considered).

{\it Notation:}
%Throughout the paper
%The superscript `$T$' stands for matrix transposition,
$\mathbb{R}^n$ denotes the $n$-dimensional Euclidean space with the
norm $|\cdot |$,
$\mathbb{R}^{n\times m}$ is the space of $n\times
m$ real matrices. The notation $P\!>\!0$ with
$P\in\mathbb{R}^{n\times n}$ means that $P$ is symmetric and
positive  definite. For the symmetric matrix $M$,
 $\lambda_{min}(M)$ and $\lambda_{max}(M)$
denote the
minimum and the maximum eigenvalues of $M$ respectively.
 The symmetric elements of the symmetric matrix
will be denoted by ${*}$.
Continuous functions (continuously differentiable) in all arguments,
are referred to as of class $C$ (of class $C^1$). $ L^2(\Omega)$ is
the Hilbert space of square integrable $f:\Omega\to \mathbb{R}$,
where $\Omega \subset \mathbb{R}^n$, with the norm $\|f
\|_{L^2}=\sqrt{\int_{\Omega} |f(x)|^2dx}$.
% Let $\partial \Omega$ be the boundary of $\Omega$.
%$L_2(0,\infty; L_2(\Omega))$ is the Hilbert space of square
%integrable functions $w:(0,\infty)\to  L_2(\Omega)$   with the  norm
%\newline
%$\|\omega\|^2_{L_2(0,\infty; L_2(\Omega))}= \int_{0}^{\infty}\int_{x
%\in \Omega}|\omega(x,t)|^2dxdt<\infty.$
%\newline
For the scalar smooth function $z=z(t,x_1,\dots, x_n)$  denote by $z_t, z_{x_k}, z_{tt}, z_{x_k x_j}$
($k,j=1, \dots n$)
the corresponding partial derivatives.
For
$z:\Omega\to \mathbb{R}$  define $\nabla z=z_x^T=[z_{x_1}\ \dots z_{x_n}]^T$,
$\Delta z=\sum_{p=1}^n z_{x_px_p}$.
${\mathcal H} ^{1}(\Omega)$ is the Sobolev space of absolutely
continuous  functions  $z:\Omega\to \mathbb{R}$ with the square
integrable $\nabla z$.
${\mathcal H} ^{2}(\Omega)$ is the Sobolev space of
  scalar functions  $z:\Omega\to \mathbb{R}$ with absolutely continuous $\nabla z$ and with $\Delta z \in L^2(\Omega)$.

\section{Observers and exponential stability of n-D wave equations}
\label{Observers}

\subsection{System under study and Luenberger type observer}
Throughout the paper we denote by $\Omega$   the n-D unit hypercube $[0,1]^n$ %\times\dots\times[0,1]$
%\begin{equation*}
%  \Omega= \{ x=(x_1,x_2,...,x_n)^T : \quad
%  x_k \in [0,1], \\ %l_k],l_k=1, \\
%  k=1,2,...,n\}
%\end{equation*}
with the boundary $\Gamma$.
%\begin{equation*}
%    \Gamma =\{x = (x_1,x_2,...,x_n)^T \in \Omega : \quad
%    \exists p \in {1,2,...,n} \\
%    s.t. \quad x_p =0 \quad \text{or} \quad x_p=1\}.
%\end{equation*}
We use the partition of the boundary:
\begin{equation*}\begin{array}{l}
    \Gamma_D \! =\! \{x \! = \! (x_1,...,x_n)^T\in \Gamma : \
    \exists p \in {1,...,n}
    \ s.t. \ x_p =0 \}\\
   \Gamma_{N,p}= \{x  \in \Gamma: \quad x_p=1 \}, \quad \Gamma_N  =\bigcup_{p=1,\dots, n} \Gamma_{N,p}.
   \end{array}
\end{equation*}
Here subscripts D and N stand for Dirichlet and  for Neumann boundary conditions
respectively.

We consider the following boundary value problem for the scalar n-D wave equation:
%\begin{equation}\label{Wave_nd_z_bound}
%\begin{array}{lll}
%     z_{tt}(x,t) & =\Delta z(x,t) + f(z,x,t)  \quad in  \quad \Omega \times  (t_0,+\infty ), \\
%      z(x,t) & = 0 \qquad on  \qquad \Gamma_D \times  (t_0,+\infty ), \\
%   \frac{\partial}{\partial \nu} z(x,t) & = 0 \qquad on  \qquad \Gamma_N \times  (t_0,+\infty ),
%\end{array}\end{equation}
\begin{equation}\label{Wave_nd_z_bound}
    \begin{array}{lll}
         z_{tt}(x,t)  &=  \Delta z(x,t)\! + \!f(z,x,t)  \; in  \; \Omega \times  (t_0,\infty ), \\
         z(x,t) &= 0 \qquad on  \qquad \Gamma_D \times  (t_0,+\infty ), \\
         \frac{\partial}{\partial \nu} z(x,t) & = 0 \qquad on  \qquad \Gamma_N \times  (t_0,\infty ),
    \end{array}
\end{equation}
where $f$ is a $C^1$ function, %with uniformly bounded partial
%derivatives, % in the two first variables.
$ \nu $ denotes the outer unit normal vector to the point $ x \in \Gamma $ and $ \frac{\partial}{\partial \nu} z  $ is the normal derivative.
Let $g_1>0$ be the known bound on the derivative of $f(z,x,t)$ with respect to $z$: % $|f_{\zeta}|\leq g_1$ and, thus,
\begin{equation}
\label{g1}
  |f_z(z,x,t)|\leq g_1 \quad \forall (z,x,t)\in \bl R^{n+2}. \end{equation}
Since $\Omega$ is a unit hypercube, the boundary conditions on $\Gamma_N$ can be rewritten as
$$
z_{x_p}(x,t)\Big\rvert_{x_p=1}=0\quad \forall x_i\in[0,1], \ i\neq p, \ p=1,\dots,n.
$$
Consider the following initial conditions:
\begin{eqnarray}\label{Wave_nd_z_initial}
    z(x, t_0)= z_0(x), \ z_t(x, t_0)= z_1(x), \quad x\in \Omega. %\\
   %\nonumber  z_t(x, t_0)= z_1(x) &\qquad& in \qquad \Omega
\end{eqnarray}
The boundary measurements are given by
\begin{equation}\label{Wave_nd_z_meas}
    y(x,t) =  z_t(x,t)  \qquad on  \qquad \Gamma_N \times  (t_0,\infty ).
\end{equation}
Similar to \cite{fridman2013observers},
the boundary-value problem (\ref{Wave_nd_z_bound}) can
be represented as an abstract differential equation by defining the
state $\zeta(t)=[\zeta_0(t)\ \zeta_1(t)]^T=[z(t) \ z_t(t)]^T$ and
the operators
$$\begin{array}{lll}
{\mathcal
A}=\left[ \begin{array}{cc} 0 & I\\
\Delta z & 0
\end{array}\right
],\quad F(\zeta,t)= \left[ \begin{array}{cc} 0 \\ F_1(\zeta_0,t)
\end{array}\right],
\end{array}$$
 where $F_1:  {\mathcal
H}^1(\Omega)\times R \to L^{2}(\Omega) $  is  defined as
$F_1(\zeta_0,t)=f(\zeta_{0}(x), x,t)$ so that it is continuous in
$t$ for each $\zeta_0\in{\mathcal H}^1(\Omega)$. The differential equation
is
\begin{equation}
\label{x} \dot \zeta(t)={\mathcal A} \zeta(t)+F(\zeta(t),t), \quad
t\geq t_0
\end{equation}
 in the Hilbert space ${\mathcal
H}={\mathcal H}_{\Gamma_D}^{1}(\Omega)\times L^2(\Omega)$, where
$${\mathcal H}^{1}_{\Gamma_D}(\Omega) =\left\{\zeta_0\in {\mathcal H}^{1}(\Omega)\ \bigg| \ {\zeta_0}_{|\Gamma_D}=0\right \}$$
and $\|\zeta\|^2_{\mathcal
H}=\|\nabla\zeta_{0}\|^2_{L^2}+\|\zeta_{1}\|^2_{L^2}$.
The operator
 ${\mathcal
A}$ has the dense  domain
$$ %\begin{equation}
\begin{array}{lll} {\mathcal
D}({\mathcal A})=
\!\!\Big\{&\!\!(\zeta_0,\zeta_1)^T\in {\mathcal
H}_{\Gamma_D}^{1}(\Omega)\times {\mathcal H}_{\Gamma_D}^{1}(\Omega)
  \bigg|\ \Delta \zeta_0\in L^2(\Omega) \\ & {\rm and}\
  \frac{\partial}{\partial \nu}{\zeta_0}_{|\Gamma_N}\!=-b{\zeta_1}_{|\Gamma_N}\! \Big\},
\label{A20}\end{array}$$
where $b=0$.
Here the boundary condition holds in a weak sense (as defined in Sect. 3.9
of \cite{GeorgeBook}), i.e. the following relation holds:
$$\begin{array}{r}
\langle \Delta \zeta_0,\phi\rangle_{L^2(\Omega)}+
\langle \nabla \zeta_0,\nabla \phi\rangle_{[L^2(\Omega)]^n}=-b
\langle \zeta_0,\phi\rangle_{L^2(\Gamma_N)} \\ \forall \phi \in {\mathcal H}_{\Gamma_D}^{1}(\Omega).
\end{array}$$

  The operator ${\mathcal A}$ is m-dissipative
 (see Proposition 3.9.2 of \cite{GeorgeBook}) and hence it generates a strongly  continuous
 semigroup.
Due to (\ref{g1}),
the following Lipschitz condition holds: %for every $T>0$:
\begin{equation}
\begin{array}{lll}
\label{LipW} \|F_1(\zeta_{0}, t)-F_1( \bar\zeta_{0}, t)\|_{L^2} \leq
g_1\|\zeta_{0}-\bar\zeta_{0}\|_{L^2}
\end{array}
\end{equation}
where $\zeta_{0},\bar\zeta_{0} \in {\mathcal H}^{1}_{\Gamma_D}(\Omega), t\in \bl R.$ Then by Theorem 6.1.2 of \cite{pazy1983semigroups}, a
unique continuous mild solution $\zeta(\cdot)$ of (\ref{x}) in
${\mathcal H}$ initialized by
\begin{equation*}\label{wave_in1}
\begin{array}{ll}
\zeta_0(t_0)=z_0\in %{\mathcal H}^{1}_{\Gamma_D}(0,1), \quad z_{0x}(1)=0,\\
  {\mathcal H}^{1}_{\Gamma_D}(\Omega),  \
 \zeta_1(t_0)=z_1\in
L^2(\Omega)
\end{array}
\end{equation*}
%i.e.  a unique solution of the integral equation
%\begin{equation}
%\begin{array}{lll}
%\label{int}
%\zeta(t)=\tline(t-t_0)\zeta(t_0)+\int_{t_0}^t\tline(t-s)F(\zeta(s),s)ds
%\end{array}
%\end{equation}
   exists in  $C([t_0, \infty), {\mathcal H})$.
 % Moreover, this solution is locally Lipschitz in the initial state (i.e. for all $T>0$ the mapping $(z_1, z_2)\to \zeta$ is
%  Lipschitz from  ${\mathcal H}$ to $C([t_0, T], {\mathcal H})$.
%  %If for all $T>0$ the operator $F: ([t_0, T], {\mathcal H}])\to {\mathcal H}$ is Lipschitz continuous in both variables
%  Note that $F: {\mathcal H}\times [t_0, \infty)\to {\mathcal H}$ is continuously differentiable.
  If $\zeta(t_0) \in  {\mathcal
D}({\mathcal A})$, then this mild solution is in $C^1([t_0, \infty),
{\mathcal H})$
 and it is a classical solution of (\ref{Wave_nd_z_bound}) with $\zeta(t) \in  {\mathcal
D}({\mathcal A})$ (see Theorem 6.1.5 of \cite{pazy1983semigroups}).

We suggest a Luenberger type observer of the form:
\begin{equation}\label{observer_nd}
   \widehat{z}_{tt}(x,t)  = \Delta \widehat{z}(x,t)  + f\Big(\widehat{z} ,x,t \Big),
   \qquad t \geq t_0, \; x \in \Omega
\end{equation}
under the  initial conditions
  $[{\widehat z}(\cdot,t_0), {\widehat
z}_t(\cdot,t_0)]^T\in {\mathcal H}$ and the
boundary conditions
\begin{equation}\label{observer_nd_boundary}
\begin{array}{llll}
    \widehat{z}(x,t)&=0                                                                     &  on \  \Gamma_D\times(t_0,\infty)\\
    \frac{\partial}{\partial \nu} \widehat{z}(x,t)& = k\Big[ y(x,t)- \widehat{z}_t(x,t)  \Big] \!& on \   \Gamma_N\times(t_0,\infty)
    \end{array}
\end{equation}
where $ k $ is the injection gain.

 The well-posedness of (\ref{observer_nd}), (\ref{observer_nd_boundary}) will be
established by showing the well-posedness of the estimation error
 $ e = z - \widehat z $.
Taking into account \eqref{Wave_nd_z_bound}, \eqref{Wave_nd_z_initial} we obtain
the following PDE for the estimation error $ e=z-\widehat{z} $:
\begin{equation}\label{Wave_nd_e}
     e_{tt}(x,t) =\Delta e(x,t)  + g  e(x,t)   \quad t\geq t_0, \quad x \in \Omega
\end{equation}
under the boundary conditions
\begin{equation}
    \begin{array}{lll}\label{bound_e}
         e(x,t)&=0                                             &  on   \  \Gamma_D\times(t_0,\infty)\\
             \frac{\partial}{\partial \nu} e(x,t)& = -k e_t(x,t)   &  on \ \Gamma_N\times(t_0,\infty).
    \end{array}
\end{equation}
Here $g e=f (z,x,t)- f(z-e,x,t)$  and
\begin{equation*}
    \begin{array}{lll}
    g=g( z,e,x,t) =\int_0^1 f_z( z +(\theta-1)e, x,t) d\theta.
    \end{array}
\end{equation*}
The initial conditions for the error are given by
\begin{equation*}
    \begin{array}{lll}
          e(x,t_0) &= z_1(x)-z(\cdot,t_0),    \\
          e_t(x,t_0) &= z_2(x)- z_t(\cdot,t_0)
    \end{array}
\end{equation*}
The boundary conditions on $\Gamma_N$ can be presented as
\begin{equation}
    \begin{array}{lll}\label{Wave_e_Neum}
    e_{x_p}(x,t)\Big\rvert_{x_p=1}=-k e_t(x,t)\quad \forall x_i\in[0,1],\\
    \nonumber \ i\neq p, \ p=1,\dots,n.
    \end{array}
\end{equation}

%The well-posedness of \eqref{observer_nd}, \eqref{observer_nd_boundary} and of the error system
%can be established similar to \cite{fridman2013observers}.
Let $z$ be a mild solution of \eqref{Wave_nd_z_bound}. %, \eqref{Wave_nd_z_initial}.
Then $z:[t_0,\infty) \to {\mathcal H}^1$ is continuous
%$\{z, z_t\}\in C^1([t_0, \infty), {\mathcal H}])$
and, thus, the function $F_2:  {\mathcal H}^1\times [t_0,\infty)\to
L_{2}(0,1) $    defined as
$$F_2(\zeta_0,t)=
f (z,x,t)- f(z-\zeta_{0},x,t)
%\int_0^1
%f_{z_{x}}( z_{x} + (\theta-1)  \zeta_{1x}, x,t)d\theta \zeta_1(x,t)
$$
satisfies the Lipschitz condition (\ref{LipW}), where $F_1$ is
replaced by $F_2$. By the above arguments, where in the definition of ${\mathcal
D}({\mathcal A})$ we have $b=k$, the error system
(\ref{Wave_nd_e}), \eqref{bound_e} has a unique mild solution $\{e,
e_t\} \in C([t_0, \infty), {\mathcal H})$ initialized by
$[e(\cdot,t_0), e_t(\cdot,t_0)]^T \in {\mathcal H}.$ Therefore, there
exists a  unique  mild solution $\{\hat z, \hat z_t\} \in C([t_0,
\infty), {\mathcal H})$ to the observer system
\eqref{observer_nd}, \eqref{observer_nd_boundary} with the  initial conditions $[\hat
z(\cdot,t_0), \hat z_t(\cdot,t_0)]^T\in {\mathcal H}$. If
$[e(\cdot,t_0), e_t(\cdot,t_0)]^T \in  {\mathcal D}({\mathcal A}),$
then
 $\{e, e_t\}
\in C^1([t_0, \infty), {\mathcal H})$ is a classical solution of
\ref{Wave_nd_e}), \eqref{bound_e} with $[e(\cdot,t), e_t( \cdot,t)]\in
{\mathcal D}({\mathcal A})$  for $t\geq t_0$.
 Hence, if  $[\hat z(\cdot,t_0), \hat z_t(\cdot,t_0)]^T\in {\mathcal
D}({\mathcal A})$ and $[z_0, z_1]^T\in {\mathcal D}({\mathcal A})$,
there exists a  unique  classical solution $\{\hat z, \hat z_t\} \in
C^1([t_0, \infty), {\mathcal H})$ to the observer system
\eqref{observer_nd}, \eqref{observer_nd_boundary} with
 $[\hat z(\cdot,t), \hat z_t(\cdot,t)]^T\in {\mathcal
D}({\mathcal A})$ for $t\geq t_0$.

\subsection{Lyapunov function and useful inequalities}

We will derive further sufficient conditions for the exponential
stability of the error wave equation \eqref{Wave_nd_e} under the
boundary conditions \eqref{bound_e}.
 Let
\begin{equation}\label{Ee}
    \begin{array}{ll}
    E(t) =  \frac{1}{2} \int_{\Omega}
    \left[|\nabla e|^2  +  e_t^2 \right]    dx,
    \end{array}
\end{equation}
be the energy of the system.
Consider the following Lyapunov function for  \eqref{Wave_nd_e}, \eqref{bound_e}:
\begin{equation}\label{Lyapunov_nd}
    \begin{array}{ll}
      V(t)
    =E(t) + \chi  \int_{\Omega}
    \left[ 2  (x^T \cdot \nabla e ) +  (n-1)e \right] e_t  dx \nonumber\\
     \quad +\chi \frac{k(n-1)}{2}\int_{\Gamma_N}e^2d\Gamma
  \end{array}
\end{equation}
%\begin{equation}\begin{array}{l}\label{Lyapunov_nd}
%  V(t)
%=E(t) + \chi  \int_{\Omega}
%\left[ 2  (x^T \cdot \nabla e ) +  (n-1)e \right] e_t  dx\\
%\hspace{1.5cm}+\chi{k(n-1)\over 2}\int_{\Gamma_N}e^2d\Gamma
%\end{array}\end{equation}
with some constant $ \chi > 0 $. Note that the above
Lyapunov function
without the last term was considered in \cite{ammari2010,fridman2010stabilization,Zuazua90}.
The time derivative of this new term of $V$ cancels the same term with the opposite sign in the time derivative of
$\chi  \int_{\Omega}[(n-1)e] e_t  dx$ (cf. \eqref{identitymul} below) leading to LMI conditions
for the exponential convergence of the error wave equation.

We will employ the following n-D extensions of the classical
inequalities:
\begin{lemma}\label{lem_ineq}
    Consider  $e\in {\mathcal H}^{1}(\Omega)$ such that
    $e\Big\rvert_{x\in \Gamma_D}=0$.
    Then the following n-D Wirtinger's inequality holds:
\begin{equation} \label{Wir}
    \begin{array}{ll}
  \int_{\Omega} \left[ \frac{4}{\pi^2 n} |\nabla e|^2 -e^2 \right] dx \geq 0.
  \end{array}
\end{equation}
    Moreover,
\begin{equation} \label{lambda2}
    \begin{array}{ll}
   \int_{\Gamma_N} e^2  d \Gamma \leq \int_{\Omega} |\nabla e|^2 dx.
      \end{array}
\end{equation}
\end{lemma}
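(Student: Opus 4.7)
The plan is to reduce both inequalities to one-dimensional slicing arguments along each coordinate direction $x_p$, exploiting the fact that the Dirichlet face $\Gamma_D$ contains the entire hyperplane $\{x_p = 0\}\cap\Omega$ for every $p$, so that $e$ vanishes at the left endpoint of every axis-aligned segment inside $\Omega$.

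For the $n$-D Wirtinger inequality \eqref{Wir}, I would fix an index $p \in \{1,\dots,n\}$ and a point $(x_1,\dots,x_{p-1},x_{p+1},\dots,x_n) \in [0,1]^{n-1}$, and view $e$ as a function of $x_p \in [0,1]$ alone. Since $e$ vanishes at $x_p=0$ (because that face lies in $\Gamma_D$), the classical one-dimensional Wirtinger inequality of \cite{Hardy} yields
\begin{equation*}
\int_0^1 e^2\, dx_p \;\le\; \frac{4}{\pi^2}\int_0^1 e_{x_p}^2\, dx_p .
\end{equation*}
Integrating this over the remaining $n-1$ variables gives $\int_\Omega e^2\,dx \le \tfrac{4}{\pi^2}\int_\Omega e_{x_p}^2\,dx$, and summing over $p=1,\dots,n$ produces $n\int_\Omega e^2\,dx \le \tfrac{4}{\pi^2}\int_\Omega |\nabla e|^2\,dx$, which rearranges to \eqref{Wir}.

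For the trace inequality \eqref{lambda2}, I would use the same slicing together with the fundamental theorem of calculus. For fixed $p$ and fixed values of the other coordinates, since $e=0$ at $x_p=0$,
\begin{equation*}
e(x_1,\dots,1,\dots,x_n)=\int_0^1 e_{x_p}(x_1,\dots,x_p,\dots,x_n)\,dx_p,
\end{equation*}
so the Cauchy--Schwarz inequality on $[0,1]$ gives $e^2|_{x_p=1}\le \int_0^1 e_{x_p}^2\,dx_p$. Integrating over the $(n-1)$-dimensional face $\Gamma_{N,p}=\{x_p=1\}$ yields $\int_{\Gamma_{N,p}} e^2\,d\Gamma \le \int_\Omega e_{x_p}^2\,dx$. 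Summing over $p=1,\dots,n$ and noting that the pairwise intersections $\Gamma_{N,p}\cap\Gamma_{N,q}$ are $(n-2)$-dimensional and therefore have zero surface measure, I obtain
\begin{equation*}
\int_{\Gamma_N} e^2\,d\Gamma \;=\; \sum_{p=1}^n \int_{\Gamma_{N,p}} e^2\,d\Gamma \;\le\; \sum_{p=1}^n \int_\Omega e_{x_p}^2\,dx \;=\; \int_\Omega |\nabla e|^2\,dx.
\end{equation*}

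No serious obstacle is expected; a density argument reduces both inequalities from smooth test functions to arbitrary $e\in\mathcal H^1(\Omega)$ with $e|_{\Gamma_D}=0$. The only delicate point is to check that the geometry of the unit hypercube is used correctly: the Dirichlet face $\Gamma_D$ must contain the entire hyperplane $\{x_p=0\}\cap\Omega$ for every $p$ in order for the 1-D Wirtinger and FTC estimates to apply in every direction, which is exactly the definition of $\Gamma_D$ given in the paper. This is also what makes the constant $4/(\pi^2 n)$ tight for this geometry, justifying the factor $n$ in the denominator.
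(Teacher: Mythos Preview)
Your proposal is correct and follows essentially the same approach as the paper: slice along each coordinate direction, apply the 1-D Wirtinger inequality (respectively, the 1-D Sobolev/FTC--Cauchy--Schwarz bound) using $e|_{x_p=0}=0$, integrate over the remaining variables, and sum over $p=1,\dots,n$. Your write-up is slightly more explicit (spelling out the FTC + Cauchy--Schwarz step that the paper simply calls ``Sobolev's inequality'', and noting that the face overlaps have zero surface measure), but the argument is the same.
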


\begin{proof}
Since $e\Big\rvert_{x_1=0}=0$, by the classical 1-D Wirtinger's inequality \cite{Hardy}
\begin{equation*}
    \begin{array}{ll}
    \int_0^1 e^2 dx_1\leq \frac{4}{\pi^2 } \int_0^1 e_{x_1}^2 dx_1.
    \end{array}
\end{equation*}
Integrating the latter inequality in $x_2,\dots, x_n$ we obtain
\begin{equation*}
    \begin{array}{ll}
    \int_{\Omega} e^2 dx\leq \frac{4}{\pi^2 } \int_{\Omega} e_{x_p}^2 dx
    \end{array}
\end{equation*}
with $p=1$. Clearly the latter inequality holds for all $p=1,\dots, n$, which after
summation in $p$  yields \eqref{Wir}.

Since $e\Big\rvert_{x_1=0}=0$ we have by Sobolev's inequality
\begin{equation*} \label{ex1bound}
    \begin{array}{ll}
    e^2(x)\Big\rvert_{x_1=1}\leq \int_0^1 e_{x_1}^2 dx_1 \ \forall x_i\in[0,1],\ i\neq 1,
    \end{array}
\end{equation*}
that after integration in $x_2,\dots,x_n$ leads to
\begin{equation*}
    \begin{array}{ll}
    \int_{\Gamma_{N,p}} e^2  d \Gamma \leq \int_{\Omega} e_{x_p}^2 dx
    \end{array}
\end{equation*}
with $p=1$. The latter inequality holds  $\forall p=1,\dots,n$ leading after summation in $p$ to \eqref{lambda2}.
\end{proof}

%Also Young's inequality will be useful:
%\begin{lemma}
%    Let $ a,b \in \mathbb{R}$ , then for all positive $ r \in \mathbb{R} $ Young's inequality holds:
%    \begin{equation}\label{young}
%        2 ab  \leq r a^2  + r^{-1} b^2
%    \end{equation}
%\end{lemma}

\subsection{Exponential stability of n-D %semilinear
wave equation}

In this section we derive LMI conditions for the exponential stability of the estimation error equation.
We start with the conditions for the positivity of the Lyapunov function:

\begin{lemma}\label{alpha_beta_section}
Let there exist positive scalars $\chi$ and $\lambda_0$ such  that
\begin{equation}\label{Phi0}
    \Phi_0\bydef
       \begin{bmatrix}
       \frac{1}{ 2}- \lambda_0 \frac{4}{\pi^2 n}& \sqrt{n}\chi &
       0 \\
            {*}       & \frac{1}{ 2}& \frac{n-1}{ 2}\chi         \\
            {*}       & {*}        & \lambda_{0}
        \end{bmatrix}>0.
\end{equation}
       Then the
    Lyapunov function $ V(t) $ is bounded as follows:
    \begin{equation}
        \begin{array}{ll}\label{Vpos}
            \alpha E(t) \leq V(t) \leq \beta E(t),\quad
            \alpha = 2\lambda_{min} (\Phi_0), \\
            \beta  = 2\left(1+\frac{2}{\pi^2n}\right)\lambda_{max} (\Phi_1)+\chi k(n-1),
        \end{array}
    \end{equation}
    where $\Phi_1=\Phi_0+diag\{\lambda_0 \frac{4}{\pi^2 n}, 0, 0\}$.
\end{lemma}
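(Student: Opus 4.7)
The plan is to convert the quadratic functional $V(t)$ into a scalar quadratic form in the three norms $a=\|\nabla e\|_{L^2}$, $b=\|e_t\|_{L^2}$, $c=\|e\|_{L^2}$, collected as $\xi=(a,b,c)^T$, and to match the resulting matrix inequality to $\Phi_0$ and $\Phi_1$. The two cross terms of $V$ are bounded by Cauchy--Schwarz together with $|x|\le\sqrt{n}$ on $\Omega=[0,1]^n$, giving
\[
\Big|\chi\!\int_\Omega\! 2(x^T\nabla e)\,e_t\,dx\Big|\le 2\sqrt{n}\,\chi\,ab,\qquad \Big|\chi(n-1)\!\int_\Omega\! e\,e_t\,dx\Big|\le \chi(n-1)\,bc.
\]
The naive quadratic form so obtained has vanishing $(3,3)$-entry and is not sign-definite in the $c$-direction, so the n-D Wirtinger bound \eqref{Wir} must enter as an S-procedure multiplier with scalar $\lambda_0\ge 0$ to control $c^2$ by $\tfrac{4}{\pi^2 n}a^2$.

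For the lower bound, I would first discard the non-negative boundary contribution and take the pessimistic sign in the Cauchy--Schwarz bounds to obtain $V\ge\xi^T M\xi$ with diagonal $(\tfrac12,\tfrac12,0)$ and off-diagonals $M_{12}=-\sqrt{n}\chi$, $M_{23}=-\tfrac{(n-1)\chi}{2}$, $M_{13}=0$. Invoking \eqref{Wir} with multiplier $\lambda_0\ge 0$ gives $\xi^T M\xi\ge\xi^T M'\xi$, where $M':=M+\lambda_0\,\mathrm{diag}\!\left(-\tfrac{4}{\pi^2 n},0,1\right)$. A direct check shows $M'=S\Phi_0 S$ with the sign-flip $S=\mathrm{diag}(1,-1,1)$, so $M'$ is similar to $\Phi_0$ and shares its spectrum; hypothesis \eqref{Phi0} then yields $\xi^T M'\xi\ge\lambda_{\min}(\Phi_0)(a^2+b^2+c^2)\ge 2\lambda_{\min}(\Phi_0)\,E(t)$, giving $\alpha=2\lambda_{\min}(\Phi_0)$.

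For the upper bound the same Cauchy--Schwarz with the favourable sign delivers $V-\chi\tfrac{k(n-1)}{2}\|e\|^2_{L^2(\Gamma_N)}\le\xi^T\hat M\xi$, where $\hat M$ equals $\Phi_1$ except $\hat M_{33}=0$; since $\Phi_1-\hat M=\mathrm{diag}(0,0,\lambda_0)\succeq 0$, this extends to $\xi^T\hat M\xi\le\lambda_{\max}(\Phi_1)(a^2+b^2+c^2)$, and a final application of \eqref{Wir} converts $a^2+b^2+c^2$ into a constant multiple of $2E(t)$ with the prefactor stated in \eqref{Vpos}. The boundary contribution is handled by \eqref{lambda2}, $\|e\|^2_{L^2(\Gamma_N)}\le a^2\le 2E(t)$, producing the additive $\chi k(n-1)$ in $\beta$. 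The main bookkeeping obstacle is the off-diagonal sign mismatch between what Cauchy--Schwarz produces and the matrix $\Phi_0$ as stated in \eqref{Phi0}; the congruence-and-similarity trick with $S=\mathrm{diag}(1,-1,1)$ is the clean way to reconcile the two.
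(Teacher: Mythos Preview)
Your argument is correct and coincides with the paper's proof. The paper handles the sign mismatch by writing the pointwise quadratic form as $\eta^T\Phi_0\eta$ with $\eta=\operatorname{col}\{|\nabla e|,-|e_t|,|e|\}$, which is exactly your congruence $M'=S\Phi_0 S$ with $S=\operatorname{diag}(1,-1,1)$; and it works pointwise under the integral rather than with the $L^2$ norms $a,b,c$, but the resulting bounds are identical.
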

\begin{proof}
By Cauchy-Schwarz inequality
we have
\begin{equation}\label{CSchw}
    \begin{array}{lll}
    |x^T \cdot \nabla e|
    \leq
    |x| |\nabla e|
    \leq
    \sqrt{n}|\nabla e|,
    \end{array}
\end{equation}
Then
\begin{equation*}
    \begin{array}{l}
        |\chi  \int_{\Omega}
        \left[ 2  (x^T \cdot \nabla e ) +  (n-1)e \right] e_t  dx|\\
        \leq
            \chi \int_{\Omega}[2\sqrt{n}|\nabla e|| e_t |+(n-1)|e| |e_t |] dx,
    \end{array}
\end{equation*}
%\begin{flalign*}
%    &   |\chi  \int_{\Omega}
%        \left[ 2  (x^T \cdot \nabla e ) +  (n-1)e \right] e_t  dx|\\
%    &   \quad \leq
%        \chi \int_{\Omega}[2\sqrt{n}|\nabla e|| e_t |+(n-1)|e| |e_t |] dx,
%\end{flalign*}
leading to
\begin{equation}
    \begin{array}{ll}\label{Vbound1}
        V(t)  \geq  \frac{1}{ 2}\int_{\Omega} \left[ e_t^2+\vert\nabla e\vert^2 \right]dx \\
             \quad    -\chi \int_{\Omega} [ 2\sqrt{n}|\nabla e|| e_t | +(n-1)|e| |e_t | ]dx.
    \end{array}
\end{equation}
Taking into account the n-D Wirtinger inequality \eqref{Wir}, we further
 apply S-procedure \cite{Yak} {\footnote {Let $A_i\in {\mathbb R}^{p\times p}, \ i=0,1.$
 Then the inequality
 $\xi^TA_0\xi\geq 0$ holds  for any $\xi\in {\mathbb R}^p$ satisfying $\xi ^T
A_1\xi\geq 0 $
iff there exists a real scalar $\lambda\geq 0,$ such that
$
 A_0-\lambda A_1\geq 0$.}}, where we subtract
 from the right-hand side of \eqref{Vbound1} the nonnegative term
\begin{equation}\label{spr}
    \lambda_0 \int_{\Omega} \left[ \frac{4}{\pi^2 n} |\nabla e|^2 -e^2 \right] dx
\end{equation}
 with $\lambda_0>0$:
\begin{equation*}
    \begin{array}{l}
     V(t)\geq \frac{1}{ 2} \int_{\Omega}\{ e_t^2+\vert\nabla
    e\vert^2\}dx\\
    -\chi \int_{\Omega} [ 2\sqrt{n}|\nabla e|| e_t | +(n-1)|e| |e_t | ]dx\\
    -\lambda_0\int_{\Omega}\left[ \frac{4}{\pi^2 n} |\nabla e|^2 -e^2 \right]dx
    = \int_{\Omega}\eta^T\Phi_0\eta,
    \end{array}
\end{equation*}
%\begin{eqnarray*}
%     V(t)     & \geq    &  \frac{1}{2} \int_{\Omega} \left[ e_t^2+\vert\nabla e\vert^2 \right]dx \\
%    \nonumber & -       & \chi \int_{\Omega} [ 2\sqrt{n}|\nabla e|| e_t | +(n-1)|e| |e_t | ]dx \\
%    \nonumber & -       &\lambda_0\int_{\Omega}\left[ \frac{4}{\pi^2 n} |\nabla e|^2 -e^2 \right]dx
%                        = \int_{\Omega}\eta^T\Phi_0\eta,
%\end{eqnarray*}
where $\eta=col \{|\nabla e|, -|e_{t}|, |e|\}$.

Similarly
\begin{eqnarray}\label{Vbound2}
     V(t)     & \leq &  \frac{1}{2} \int_{\Omega}\left[ e_t^2+\vert\nabla e\vert^2 \right]dx \\
    \nonumber & +    & \chi \int_{\Omega} \left[ 2\sqrt{n}|\nabla e|| e_t | +(n-1)|e| |e_t | \right]dx \\
    \nonumber & +    &\chi\frac{k(n-1)}{ 2}\int_{\Gamma_N}e^2d\Gamma \\
    \nonumber & \leq & \eta_1^T\Phi_1\eta_1+\chi \frac{k(n-1)}{ 2}\int_{\Gamma_N}e^2d\Gamma
\end{eqnarray}
with $\eta_1= col\{|\nabla e|, |e_{t}|, |e|\}$.

Then \eqref{Vpos} follows from
\begin{equation*}
    \begin{array}{l}
    \lambda_{min}(\Phi_0) [2E(t)+\int_{\Omega}e^2dx]\leq V(t) \\
    \leq \lambda_{max}(\Phi_1) [2E(t)+\int_{\Omega}e^2dx]+\chi \frac{k(n-1)}{ 2}\int_{\Gamma_N}e^2d\Gamma
    \end{array}
\end{equation*}
and from the  inequalities \eqref{Wir} and \eqref{lambda2}.
%
%\begin{equation}\label{Vbound1}
%\begin{array}{l}
%    \chi (n-1) |\int_{\Omega} e e_t  dx|
%     \leq
%    \chi (n-1) \int_{\Omega} |e| |e_t | dx
%    \\ \leq
%    \frac{1}{2} \chi (n-1) \int_{\Omega} e^2 dx + \frac{1}{2} \chi (n-1) \int_{\Omega} e_t^2 dx
%    \\ \leq
%    \frac{2}{\pi^2} \chi \frac{(n-1)}{n} \int_{\Omega} |\nabla e|^2 dx + \frac{1}{2} \chi (n-1) \int_{\Omega} e_t^2, dx
%\end{array}\end{equation}
%where the last inequality is due to  Wirtinger's inequality \eqref{Wir}.
%
%
%Summarizing, we arrive at
%$$
%\begin{array}{l}
%  |V(t)-E(t)|\leq \chi\int_{\Omega}\left[
%     \left(n +\frac{2}{\pi^2} \frac{(n-1)}{n}\right) |\nabla e|^2+
%    \frac{1}{2}  (n+1) e_t^2  \right]  dx
%    \end{array}$$
%that implies the result.
\end{proof}

We are looking next for conditions that  guarantee
$\dot V(t)+2\delta V(t) \leq 0$
 along the classical solutions of the wave equation initiated from
$[z_0, z_1]^T, [\hat z (\cdot, t_0), \hat
z_t(\cdot, t_0)]^T \in{\mathcal D}({\mathcal A})$. Then
$V(t)\leq e^{-2\delta (t-t_0)}V(t_0)$ and, thus,    \eqref{Vpos}
yields
\begin{flalign}\label{lem_enc}
    &   \int_{\Omega} [| \nabla e|^{2}(x,t)+e^{2}_t(x,t)]dx \\
    &   \quad \leq
        \frac{\beta}{ \alpha} e^{-2\delta (t-t_0)} \int_{\Omega}[| \nabla (z_{0}(x) - \hat z(x, t_0))|^2 \nonumber \\
    & \quad     +  (z_1(x)-\hat z_{t}(x,t_0))^2]dx. \nonumber
\end{flalign}
 Since ${\mathcal D}({\mathcal A})$ is dense in ${\mathcal H}$ the same estimate \eqref{lem_enc}
remains true (by continuous extension) for any initial conditions
$[z_0, z_1]^T, [\hat z (\cdot, t_0), \hat z_t(\cdot, t_0)]^T
\in{\mathcal H}$. For such initial conditions we have mild solutions
of \eqref{Wave_nd_z_bound}, \eqref{Wave_nd_z_initial}.
%We will derive LMI conditions to ensure
%$ \frac{d}{dt} V(t) + 2 \delta V(t) \leq 0 $  for the non-linear case with $f(z,x,t) \neq 0$ .
%
%The following proposition gives sufficient conditions in terms of LMIs that guarantee
%the exponential solution bound
%\eqref{lem_enc}:
\begin{theorem}
    \label{wave_nd_exp}
    Given $k>0$ and $\delta >0$, assume that there exist
    positive constants $\chi, \lambda_0$ and $\lambda_1$  that satisfy
    the LMI
      \eqref{Phi0} % (\ref{V>0})
     and  the following LMIs:
\begin{equation} \label{Psi1_4}
    \begin{array}{l}
        \Psi_1 \bydef - k +  (1+ k^2n)\chi
             \leq 0, \\
         \Psi_2 \bydef
            \begin{bmatrix}
           \psi_{2}& 2\delta \sqrt{n}\chi & \sqrt{n} g_1\chi\\
            * & -\chi+\delta & \frac{1}{ 2}g_1+\delta(n-1)\chi\\
            * & * & -\lambda_ 1 +g_1  (n-1)\chi
            \end{bmatrix}
            \leq 0,\\
    \psi_{2}=-\chi+\delta(1+\chi k(n-1))+ \lambda_1 \frac{4}{\pi^2 n}.
\end{array}
\end{equation}
% where $\psi_{2}=-\chi+\delta(1+\chi k(n-1))+ \lambda_1 \frac{4}{\pi^2 n}$.
Then,  under the condition \eqref{g1},
solutions of the boundary-value problem \eqref{Wave_nd_e},
    \eqref{bound_e} satisfy \eqref{lem_enc}, where $\alpha$ and
    $\beta$ are given by \eqref{Vpos}, i.e. the system governed by
    \eqref{Wave_nd_e}, \eqref{bound_e} is exponentially stable with a
    decay rate $\delta>0$.
\end{theorem}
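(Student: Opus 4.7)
The plan is to follow a standard Lyapunov-plus-multiplier argument: work with classical solutions in $\mathcal{D}(\mathcal{A})$, compute $\dot V+2\delta V$ along such solutions, show it is $\leq 0$ under the stated LMIs, deduce $V(t)\leq e^{-2\delta(t-t_0)}V(t_0)$, and extend to mild solutions by density of $\mathcal{D}(\mathcal{A})$ in $\mathcal{H}$. Combined with the two-sided estimate \eqref{Vpos} from Lemma \ref{alpha_beta_section}, this yields \eqref{lem_enc}.

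To compute $\dot V$, I would differentiate the three pieces of $V$ separately and use $e_{tt}=\Delta e+ge$. For the energy $E(t)$, Green's formula and the boundary conditions give $\dot E=-k\int_{\Gamma_N}e_t^2\,d\Gamma+\int_\Omega g\,e\,e_t\,dx$. For the multiplier term with $2(x^T\!\cdot\!\nabla e)e_t$, I would use the divergence theorem for $\int 2(x\cdot\nabla e_t)e_t\,dx$ and the classical Rellich-type identity
$2\int_\Omega (x\cdot\nabla e)\Delta e\,dx=\int_\Gamma[2(x\cdot\nabla e)\tfrac{\partial e}{\partial\nu}-(x\cdot\nu)|\nabla e|^2]d\Gamma+(n-2)\!\int_\Omega|\nabla e|^2dx$.
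For the multiplier $(n-1)e\,e_t$, standard integration by parts suffices. The key geometric simplification is that on the unit hypercube $x\cdot\nu=0$ on $\Gamma_D$ and $x\cdot\nu=1$ on $\Gamma_N$; since $e=0$ on $\Gamma_D$ forces $\nabla e$ to be normal there, all boundary contributions on $\Gamma_D$ vanish. On $\Gamma_N$ I substitute $\frac{\partial e}{\partial\nu}=-ke_t$. At this stage the cancellation that motivated the extra term in $V$ appears: the awkward boundary integral $-\chi k(n-1)\int_{\Gamma_N}e\,e_t\,d\Gamma$ coming from the $(n-1)e\,e_t$ multiplier is exactly the opposite of $\frac{d}{dt}\bigl[\chi\tfrac{k(n-1)}{2}\int_{\Gamma_N}e^2 d\Gamma\bigr]$ and drops out.

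Next, the remaining boundary integral on $\Gamma_N$ has the form $\int_{\Gamma_N}[(-k+\chi)e_t^2-2k\chi(x\cdot\nabla e)e_t-\chi|\nabla e|^2]d\Gamma$. Using $|x\cdot\nabla e|\leq\sqrt{n}|\nabla e|$ and Young's inequality $2k\sqrt{n}|\nabla e||e_t|\leq|\nabla e|^2+k^2 n\,e_t^2$, the $|\nabla e|^2$ terms cancel and the boundary contribution is bounded by $\Psi_1\int_{\Gamma_N}e_t^2 d\Gamma$, explaining exactly where the condition $\Psi_1\leq 0$ comes from. For the interior terms I would collect the coefficients of $e_t^2$, $|\nabla e|^2$ and the cross-term $(x\cdot\nabla e)ge$, and add the $2\delta V$ contribution (bounding the boundary piece $\delta\chi k(n-1)\!\int_{\Gamma_N}e^2\,d\Gamma$ by $\delta\chi k(n-1)\!\int_\Omega|\nabla e|^2 dx$ via \eqref{lambda2}). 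Applying $|g|\leq g_1$ and Cauchy--Schwarz $|x\cdot\nabla e|\leq\sqrt{n}|\nabla e|$ produces the cross entries $\sqrt{n}g_1\chi$, $\frac{g_1}{2}+\delta(n-1)\chi$ and $2\delta\sqrt{n}\chi$. Finally, an S-procedure step with the n-D Wirtinger inequality \eqref{Wir} and multiplier $\lambda_1\geq 0$ absorbs $\int_\Omega e^2 dx$ into $\int_\Omega|\nabla e|^2 dx$, yielding the $(3,3)$ entry $-\lambda_1+g_1(n-1)\chi$ and the term $\lambda_1\frac{4}{\pi^2 n}$ in $\psi_2$.

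Putting it all together gives $\dot V+2\delta V\leq\int_\Omega\eta^T\Psi_2\,\eta\,dx+\Psi_1\!\int_{\Gamma_N}e_t^2 d\Gamma$ with $\eta=\mathrm{col}\{|\nabla e|,|e_t|,|e|\}$, which is nonpositive under \eqref{Psi1_4}. Integration then yields the exponential decay of $V$, and \eqref{Vpos} converts this into \eqref{lem_enc}. I expect the main obstacle to be the careful bookkeeping of boundary terms in $n$ dimensions: using the hypercube geometry to kill the $\Gamma_D$ contributions, recognizing the exact cancellation produced by the new $\int_{\Gamma_N}e^2 d\Gamma$ term in $V$, and choosing the Young-inequality split on $\Gamma_N$ so that the surviving boundary quadratic form collapses into the single scalar condition $\Psi_1\leq 0$ rather than into a messier tangential-gradient expression.
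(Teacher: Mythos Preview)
Your proposal is correct and follows essentially the same route as the paper: differentiate $V$, use the Rellich/Lions multiplier identity together with the hypercube geometry ($x\cdot\nu=0$ on $\Gamma_D$, $x\cdot\nu=1$ on $\Gamma_N$) and the Dirichlet condition to kill the $\Gamma_D$ boundary terms, observe that the extra $\tfrac{\chi k(n-1)}{2}\int_{\Gamma_N}e^2\,d\Gamma$ term in $V$ cancels the $-\chi k(n-1)\int_{\Gamma_N}e\,e_t\,d\Gamma$ produced by the $(n-1)e\,e_t$ multiplier, complete the square on $\Gamma_N$ to reduce the boundary quadratic to $\Psi_1\int_{\Gamma_N}e_t^2\,d\Gamma$, bound the $2\delta V$ boundary piece via \eqref{lambda2}, and apply the S-procedure with \eqref{Wir} to arrive at $\Psi_2$. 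The paper's proof is organized identically, so there is nothing to add.
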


\begin{proof}
Differentiating $V$ in time %and substituting $ e_{tt}=\Delta e + g e $
we obtain
\begin{equation*}
    \begin{array}{l}
    \dot V(t) = \dot E(t)+\chi \frac{d}{ dt} \Big[ \int_{\Omega}
    \left[ 2  (x^T \cdot \nabla e ) +  (n-1)e \right] e_t  dx\Big] \\
    \quad +\chi k(n-1)\int_{\Gamma_N}e e_t d\Gamma
    \end{array}
\end{equation*}
We have
%Differentiation in time and substitution of $e_{tt}=\Delta e+g e$
%leads to
\begin{equation*}
    \begin{array}{lll}
    \dot{E}(t)= \int_\Omega \left( (\nabla e)^T \nabla (e_t) + e_t e_{tt} \right) dx .
    \end{array}
\end{equation*}
Applying Green's formula to the first integral term, substituting $e_{tt}=\Delta e+g e$ and taking into account \eqref{g1}, we find
\begin{equation*}
    \begin{array}{lll}
          \dot{E}(t) & = \int_\Gamma e_t\frac{\partial e}{\partial \nu} d \Gamma
                                  - \int_\Omega e_t \Delta e  dx + \int_\Omega e_t[\Delta e+ g e] dx \\
                      & \leq -k \int_{\Gamma_N} e^2_t d \Gamma + g_1 \int_\Omega |e| |e_t| dx
    \end{array}
\end{equation*}
 Furthermore, we have
%$$\begin{array}{l}
% \frac{d}{dt} \left \{ \int_{\Omega} [2x^T\nabla e
%+(n-1) e] e_t  dx\right\} \\
%=   \int_{\Omega} \frac{d}{dt}[2x^T\nabla e
%+(n-1) e] e_t  dx+  \int_{\Omega} [2x^T\nabla e
%+(n-1) e] [\Delta e +ge]  dx.
%\end{array}
%$$
\begin{equation*}
    \begin{array}{l}
       \frac{d}{dt} \left \{ \int_{\Omega} [2x^T\nabla e +(n-1) e] e_t  dx\right\} \\
       \quad =   \int_{\Omega} \frac{d}{dt}[2x^T\nabla e
                    +(n-1) e] e_t  dx \nonumber \\
       \quad     +  \int_{\Omega} [2x^T\nabla e     +(n-1) e] [\Delta e +ge]  dx. \nonumber
    \end{array}
\end{equation*}
Then Green's formula leads to (see (11.35) of \cite{lions1988exact})
\begin{equation}\label{nVterm}
    \begin{array}{l}
     \frac{d}{dt} \left \{ \int_{\Omega} [x^T\nabla e +(n-1) e ] e_t  dx\right\} \\
     \quad =2 \int_{\Gamma_N} x^T\nabla e\ \frac{\partial e}{\partial\nu}d\Gamma- \int_{\Gamma_N}(x^T\nu)\vert\nabla e\vert^2 d\Gamma \\
    \quad  +(n-2) \! \int_{\Omega}\vert\nabla e\vert^2 dx +\int_{\Gamma_N}(x^T\nu)e_t^2d\Gamma
    -n \! \int_{\Omega} e_t^2 dx\\
     \quad  +(n-1) \! \int_{\Omega}
    e_t^2 dx+(n-1)\int_{\Gamma_N} e\frac{\partial
    e}{\partial\nu} d\Gamma\\
     \quad -(n-1)\! \int_{\Omega}\vert\nabla e\vert^2 dx
    +  \int_{\Omega} [2x^T\nabla e
    +(n-1) e]ge  dx
    \end{array}
\end{equation}
Noting that
$ x^T\nu=1$  on $\Gamma_N$
and taking into account the boundary conditions we obtain
%\begin{equation}\label{identitymul}
%\begin{array}{l}
% \frac{d}{dt} \left \{ \int_{\Omega} [2x^T\nabla e
%+(n-1) e] e_t  dx\right\} =-\int_{\Omega}\{ e_t^2+\vert\nabla
%e\vert^2 + [2x^T\nabla e
%+(n-1) e] g e \} dx
%\\\medskip
%\hspace{1.2 cm}\displaystyle{
%+\int_{\Gamma_N}\left[e_t^2-k(n-1)e e_t\right]d\Gamma
%-\int_{\Gamma_N}\left[\vert\nabla e\vert^2 +2 kx^T\nabla ee_t\right ] d\Gamma}
%\end{array}
%\end{equation}
\begin{equation}
\begin{array}{l}\label{identitymul}
    \frac{d}{dt} \left \{ \int_{\Omega} [2x^T\nabla e
    +(n-1) e] e_t  dx\right\}  \\
\quad  =-\int_{\Omega}\{ e_t^2+\vert\nabla
    e\vert^2 + [2x^T\nabla e
    +(n-1) e] g e \} dx
    \\
    \quad -  \int_{\Gamma_N} \! \left[\vert\nabla e\vert^2 \! + \! 2 kx^T\nabla ee_t\right ] d\Gamma
    \\
    \quad +  \int_{\Gamma_N} \! \left[e_t^2 \! - \! k(n-1)e e_t\right]d\Gamma
    \end{array}
\end{equation}

By %Cauchy-Schwarz
inequalities \eqref{CSchw} and
 \eqref{g1} we have
%$$\begin{array}{lll}
%\int_{\Omega}[2x^T\nabla e
%+(n-1) e] g e dx\leq \int_{\Omega}2|x^T\nabla e| |g| |e| dx|
%+(n-1)g_1 \int_{\Omega}e^2 dx\\ \leq
%\int_{\Omega}[2\sqrt{n} g_1 |\nabla e|  |e|+(n-1)g_1 e^2]dx.
%\end{array}$$
\begin{equation*}
    \begin{array}{l}
    \int_{\Omega}[2x^T\nabla e
    +(n-1) e] g e dx \leq \int_{\Omega}\Big[2|x^T\nabla e| |g| |e| dx \\
     +(n-1)g_1 e^2 \Big]dx \leq
    \int_{\Omega}[2\sqrt{n} g_1 |\nabla e|  |e|+(n-1)g_1 e^2]dx.
    \end{array}
\end{equation*}
Further due to \eqref{CSchw}
%By Cauchy-Schwarz inequality
%$$\begin{array}{lll}-\int_{\Gamma_N}2 kx^T\nabla ee_t d\Gamma\leq 2k\int_{\Gamma_N}|x^T\nabla e||e_t|d\Gamma
%\leq 2k \sqrt{n} \int_{\Gamma_N}|\nabla e||e_t|d\Gamma.
%\end{array}$$
\begin{equation*}
    \begin{array}{l}
           - \int_{\Gamma_N}  2 kx^T\nabla ee_t d\Gamma  \\
 \quad     \leq  2k  \int_{\Gamma_N}  |x^T\nabla e||e_t| d\Gamma
            \leq  2k \sqrt{n}   \int_{\Gamma_N}  |\nabla e||e_t|d\Gamma.
    \end{array}
\end{equation*}
Then by completion of squares we find
%$$\begin{array}{lll}
%-\int_{\Gamma_N}\left[\vert\nabla e\vert^2 +2 kx^T\nabla ee_t\right ] d\Gamma\\
%\leq -\int_{\Gamma_N}\left[\vert\nabla e\vert - k\sqrt{n} |e_t|\right]^2 d\Gamma+ k^2n\int_{\Gamma_N} e_t^2 d\Gamma
%\leq k^2n\int_{\Gamma_N} e_t^2 d\Gamma.
%\end{array}$$
\begin{equation*}
    \begin{array}{l}
     -\int_{\Gamma_N}\left[\vert\nabla e\vert^2 +2 kx^T\nabla ee_t\right ] d\Gamma \\
     \; \leq \! \int_{\Gamma_N} \!\! \Big[k^2ne_t^2 -\left[\vert\nabla e\vert \! - \! k\sqrt{n} |e_t|\right]^2 \Big]d\Gamma \!
     \leq \! k^2n \! \int_{\Gamma_N} \! e_t^2 d\Gamma.
    \end{array}
\end{equation*}

Summarizing we obtain
%\begin{equation}\label{dotV}
%\begin{array}{l}
%\dot V(t) \leq [-k+\chi(1+k^2n)] \int_{\Gamma_N}  e_t^2 d \Gamma
%-\chi\int_{\Omega}\{ e_t^2+\vert\nabla
%e\vert^2\}dx\\
%+\int_{\Omega}\left[2\sqrt{n}\chi g_1 |\nabla e|  |e|+(n-1)\chi g_1 e^2+g_1|e| |e_t|\right]dx.
%\end{array}
%\end{equation}
\begin{equation}
    \begin{array}{l}\label{dotV}
      \dot V(t) \leq [\chi(1+k^2n)-k] \int_{\Gamma_N}\!  e_t^2 d \Gamma
                \!-\!\int_{\Omega}\Big [\chi[ e_t^2+\vert\nabla     e\vert^2] \\
             \! -  \! \left[2\sqrt{n}\chi g_1 |\nabla e|  |e| \! + \! (n \! - \! 1)\chi g_1 e^2 \! + \! g_1|e| |e_t|\right] \!\Big] dx.
    \end{array}
\end{equation}
%From \eqref{Vbound} we have
%\begin{equation}\begin{array}{l}\label{Vbound2}
% V(t)\leq {1\over 2} \int_{\Omega}\{ e_t^2+\vert\nabla
%e\vert^2\}dx+\chi \int_{\Omega} [ 2\sqrt{n}|\nabla e|| e_t | +(n-1)|e| |e_t | ]dx.
%\end{array}\end{equation}
Therefore, employing \eqref{Vbound2} we arrive at
%\begin{equation}\label{dotVV}
%\begin{array}{l}
%\dot V(t) +2\delta V(t) \leq \int_{\Gamma_N}[\Psi_1  e_t^2+ \delta\chi{k(n-1)}e^2] d \Gamma
%-(\chi-\delta)\int_{\Omega}\{ e_t^2+\vert\nabla
%e\vert^2\}dx\\
%+\int_{\Omega}\Big[2\sqrt{n}\chi g_1 |\nabla e|  |e|+(n-1)\chi g_1 e^2+[g_1+ 2\delta\chi(n-1)]|e| |e_t|+4\delta \chi \sqrt{n}|\nabla e|| e_t |\Big]dx.
%\end{array}
%\end{equation}
\begin{equation}
    \begin{array}{l}\label{dotVV}
     \dot V(t)  +   2\delta V(t)      \leq      \int_{\Gamma_N}[\Psi_1  e_t^2+ \delta\chi{k(n-1)}e^2] d \Gamma \\
     \quad - (\chi-\delta)\int_{\Omega}\left[ e_t^2+\vert\nabla e\vert^2\right] dx \\
     \quad +  \int_{\Omega}\Big[2\sqrt{n}\chi g_1 |\nabla e|  |e| + (n-1)\chi g_1 e^2+\\
     \quad + [g_1+ 2\delta\chi(n-1)]|e| |e_t|+ 4\delta \chi \sqrt{n}|\nabla e|| e_t |\Big]dx.
    \end{array}
\end{equation}
By taking into account Wirtinger's inequality \eqref{Wir}, we add to \eqref{dotVV} the nonnegative term \eqref{spr}, where
$\lambda_0$ is replaced by $\lambda_1>0$.
%\begin{equation*}
%    \begin{array}{l}
%    \lambda_1 \int_{\Omega} \left[ \frac{4}{\pi^2 n} |\nabla e|^2 -e^2 \right] dx\geq 0,
%    \end{array}
%\end{equation*}
%with some positive constant $\lambda_1$.
Denote
$\eta_2 = col \{|\nabla e|, |e_{t}|, |e|\}$.
%\begin{equation}\label{eta}
%    \eta_1^T = [e  \quad  - e_t  ] \Big\rvert_{x \in \Gamma_N}, \qquad   \eta_2^T = [ |\nabla e|  \quad   |e_{t}|\quad |e|]
%\end{equation}
Then after employing the bound \eqref{lambda2} we arrive at
\begin{equation*}
    \begin{array}{l}
        \frac{d}{dt} V(t) + 2 \delta V(t)
        \leq
        \Psi_1 \int_{\Gamma_N}  e_t^2 d \Gamma
        +  \int_\Omega  \eta_2^T \Psi_2 \eta_2 dx
        \leq 0
    \end{array}
\end{equation*}
if the LMIs \eqref{Psi1_4} are feasible.
\end{proof}

\begin{remark}
    For $n>1$ the term $\chi  \int_{\Omega}
    (n-1)e e_t  dx$  of $V$
    leads  to $-\chi\int_{\Omega}\vert\nabla e\vert^2 dx$ in $\dot V$ (cf. \eqref{nVterm}).
\end{remark}

\section{Exact observability of n-D wave equation}\label{sec_ExObs}

Our next objective is to recover (if possible) the unique initial
state \eqref{Wave_nd_z_initial}  of the solution to
\eqref{Wave_nd_z_bound}-\eqref{Wave_nd_z_initial} from the measurements on the finite
time interval
\begin{equation}
    \begin{array}{ll}\label{measW}
        y(x,t)=z_t(x,t)\ on  \ \Gamma_N \times [t_0, t_0+T], \ T>0.
    \end{array}
\end{equation}

\begin{definition} \cite{fridman2013observers}
 The system \eqref{Wave_nd_z_bound}, \eqref{Wave_nd_z_initial} %-(\ref{wave_in})
with the measurements \eqref{measW} is called exactly observable in
time $T$, if

(i) for any initial condition $[z_0, z_1]^T\in {\mathcal
H}={\mathcal H}_{\Gamma_D}^{1}(\Omega)\times L^2(\Omega)$
  it is possible to find a sequence  $[z_0^m, z_1^m]^T\in {\mathcal H} (m=1,2,...)$ from the measurements
  \eqref{measW}
such that
$\lim_{m\to
\infty}\|[z_0^m, z_1^m]^T-[z_0, z_1]^T\|_{\mathcal H}=0$
(i.e. it is possible to recover the unique initial state as
$[z_0, z_1]^T=\lim_{m\to\infty}[z_0^m, z_1^m]^T$);

(ii) there exists a constant $C>0$ such that
for any initial conditions $[z_0, z_1]^T\in {\mathcal H}$ and  $[\bar z_0, \bar z_1]^T\in {\mathcal H}$ leading to the measurements $y(x,t)$ and $
\bar y(x,t)$ and to the corresponding sequences $[z_0^m, z_1^m]^T$ and
$[\bar z_0^m, \bar z_1^m]^T$, the following holds: %respectively,
\begin{equation}
    \begin{array}{l}\label{def_exact}
    \|\lim_{m\to \infty}[z_0^m, z_1^m]^T-\lim_{m\to \infty}[\bar z_0^m, \bar z_1^m]^T\|^2_{\mathcal{H}} %=
   \\
    \leq
    C\int_{t_0}^{t_0+T}\int_{\Gamma_N}|y(x,t)-\bar y(x,t)|^2d\Gamma dt.
    \end{array}
\end{equation}
The time $T$ is called the observability time.

The system is called {\it regionally exactly observable} if the above conditions hold for
all $[z_0, z_1]^T\in {\mathcal H}$ with $\|[z_0, z_1]^T\|_{\mathcal{H}}\leq d_0$ for some $d_0>0$.
\end{definition}

Note that \eqref{def_exact} means the continuous in the measurements recovery of the initial state.
In this section we will derive LMI sufficient conditions for n-D wave equations
with globally Lipschitz in the first argument $f$, where
\eqref{g1} holds globally in $z$.
In Section \ref{regional_obs_section}, we will present
  LMI-based conditions for the
 regional observability
 for 1-D wave equation, where \eqref{g1} holds locally
in $z$.

\subsection{Iterative forward and backward observer design}
In order to recover the initial state
  of the solution to
\eqref{Wave_nd_z_bound}
from the measurements \eqref{measW}
 we use the iterative procedure
as in \cite{GeorgeAut10}. Define the sequences of forward $z^{(m)}$ and
backward observers $z^{b(m)}, \ m=1,2,...$ with the injection gain
$k>0$:
\begin{equation}\label{zn}
    \begin{array}{lllll}
       z^{(m)}_{tt}(x,t)=
         \Delta z^{(m)}(x,t)+f(z^{(m)}(x,t),  x,t), \\
         z^{(m)}(x,t)  =
          0,\quad x \in \Gamma_D, \\
         \frac{\partial}{\partial \nu} z^{(m)}(x,t)  = k [ y(x,t)- z_t^{(m)}(x,t)],\quad x \in \Gamma_N,   \\
          \qquad t\in[t_0,t_0+T],\\
         z^{(m)}(x,t_0)=z^{b(m-1)}(x,t_0), \\
          z^{(m)}_t(x,t_0)= z_t^{b(m-1)}(x,t_0),
    \end{array}
\end{equation}
where
$z^{b(0)}(x,t_0)= z_t^{b(0)}(x,t_0)\equiv 0$,
 and
\begin{equation}\label{zbn}
    \begin{array}{lllll}
          z^{b(m)}_{tt}(x,t)=
         \Delta  z^{b(m)}(x,t)+f( z^{b(m)}(x,t), x,t),  \\
         z^{b(m)}(x,t)  =
          0,\quad x \in \Gamma_D, \\
         \frac{\partial}{\partial \nu} z^{b(m)}(x,t)  = -k[ y(x,t)- z_t^{b(m)}(x,t) ], \quad x \in \Gamma_N,
            \\  \ t\in[t_0,t_0+T],\\
        z^{b(m)}(x,t_0+T)=z^{(m)}(x,t_0+T),\
        \\
         z_t^{b(m)}(x,t_0+T)= z_t^{(m)}(x,t_0+T). \\
    \end{array}
\end{equation}
This results in the sequence of the forward $e^{(m)}=z-z^{(m)}$ and
the backward  $e^{b(m)}=z-z^{b(m)}, \ m=1,2,...$ errors satisfying
\begin{equation}\label{en}
    \begin{array}{l}
          e^{(m)}_{tt}(x,t)=
         \Delta  e^{(m)}_{x}(x,t)+g^{(m)}e^{(m)}(x,t),  \\
        e^{(m)}(x,t)\! \Big\rvert_{x \in \Gamma_D}\!  = \! 0,
        \quad \frac{\partial}{\partial \nu} e^{(m)}(x,t) \! = \! - \! k  e_t^{(m)}(x,t)\! \Big\rvert_{x \in \Gamma_N},
        \\ \quad t\in[t_0,t_0+T],\\
        e^{(m)}(x,t_0)= e^{b(m-1)}(x,t_0),\\
        e_t^{(m)}(x,t_0)=  e_t^{b(m-1)}(x,t_0),\\
    \end{array}
\end{equation}
where $ e^{b(0)}(x,t_0)=z_0(x),\   e_t^{b(0)}(x,t_0)=z_1(x)$
%$$\begin{array}{ll} e^{b(0)}(x,t_0)=z_0(x),\
%  e_t^{b(0)}(x,t_0)=z_1(x)\end{array}$$
   and
\begin{equation}\label{ebn}
    \begin{array}{l}
    e^{b(m)}_{tt}(x,t)=\Delta e^{b(m)}(x,t)+ g^{b(m)}e^{b(m)}(x,t),  \\
    e^{b(m)}(x,t)\! \Big\rvert_{x \in \Gamma_D} \! = \! 0, \quad \frac{\partial}{\partial \nu} e^{(m)}(x,t) \! =  \! k e_t^{(m)}(x,t)     \! \Big\rvert_{x \in \Gamma_N}, \\
    \quad t \in[t_0,t_0+T],\\
    e^{b(m)}(x,t_0+T)=e^{(m)}(x,t_0+T),
    \\
    e_t^{b(m)}(x,t_0+T)=e_t^{(m)}(x,t_0+T).
    \end{array}
\end{equation}
Here
\begin{equation}
    \begin{array}{llll}\label{gm}
        g^{(m)} =g( z, e^{(m)},x, t)=\int_0^1 f_{z}(z+ (\theta-1)  e^{(m)}, x,t) d\theta, \\
        g^{b(m)}\!=\!g( z, e^{b(m)},x, t)\!=\! \int_0^1 f_{z}( z + (\theta-1)e^{b(m)}, x,t) d\theta.
    \end{array}
\end{equation}

\subsection{LMIs  for the exact observability time} % and the convergence rate of N-D wave equation }

%Following \cite{fridman2013observers},
For (\ref{en}) and  (\ref{ebn}) we  consider for $t\in[t_0, t_0+T]$
the Lyapunov functions
%\begin{equation}\begin{array}{l}\label{Vn}
%  V^{(m)}(t)
%=E^{(m)}(t) + \chi  \int_{\Omega}
%\left[ 2  (x^T \cdot \nabla e^{(m)} ) +  (n-1)e^{(m)} \right] {e^{(m)}_t}  dx\\
%\hspace{1.5cm}
%+\chi{k(n-1)\over 2}\int_{\Gamma_N}(e^{(m)})^2 d\Gamma,\\
%E^{(m)}(t)=\frac{1}{2} \int_{\Omega}
%    \left[|\nabla e^{(m)}|^2+  ({e^{(m)}_t})^2 \right]
%    dx
%\end{array}
%\end{equation}
\begin{equation}
    \begin{array}{llll}\label{Vn}
   V^{(m)}(t)   = E^{(m)}(t)
            +\!\chi \frac{k(n-1)}{2}\int_{\Gamma_N}(e^{(m)})^2 d\Gamma \\
                         \quad +  \chi  \int_{\Omega}
            \left[ 2  (x^T  \cdot  \nabla e^{(m)} )  +  (n  -  1)e^{(m)} \right] {e^{(m)}_t}  dx ,\\
   \quad \!E^{(m)}(t)\! = \! \frac{1}{2} \int_{\Omega}
            \left[|\nabla e^{(m)}|^2+  ({e^{(m)}_t})^2 \right]
            dx
    \end{array}
\end{equation}
  and
%  \begin{equation}\begin{array}{l}\label{Vbn}
%  V^{b(m)}(t)
%=E^{b(m)}(t)- \chi  \int_{\Omega}
%\left[ 2  (x^T \cdot \nabla e^{b(m)} ) +  (n-1)e^{b(m)} \right] {e^{b(m)}_t}  dx\\
%\hspace{1.5cm}

%-\chi{k(n-1)\over 2}\int_{\Gamma_N}(e^{b(m)})^2 d\Gamma,\\
%E^{b(m)}(t)=\frac{1}{2} \int_{\Omega}
%    \left[|\nabla e^{b(m)}|^2  +  ({e^{b(m)}_t})^2\right]
%    dx
%\end{array}\end{equation}
\begin{equation}
    \begin{array}{llll}\label{Vbn}
   V^{b(m)}(t)  =   E^{b(m)}(t)
                                +\chi \frac{k(n-1)}{ 2}\int_{\Gamma_N}(e^{b(m)})^2 d\Gamma \\
                        \quad - \!  \chi  \int_{\Omega}
                         \left[ 2  (x^T  \cdot  \nabla e^{b(m)} ) +  (n-1)e^{b(m)} \right] {e^{b(m)}_t}  dx,\\
 E^{b(m)}(t) \! \!= \! \frac{1}{2} \int_{\Omega}
                        \left[|\nabla e^{b(m)}|^2  +  ({e^{b(m)}_t})^2\right]
                         dx
    \end{array}
\end{equation}
  with some constant  $\chi>0$. Then for $\chi$ and $\lambda_0>0$ subject to \eqref{Phi0}
  we have (cf. (\ref{Vpos}))
%\begin{equation}\begin{array}{lllll}\label{Vn>alpha}
%\alpha E^{(m)}(t)\leq V^{(m)}(t)\leq \beta E^{(m)}(t), \quad t\geq t_0,\\
%\alpha E^{b(m)}(t)\leq V^{b(m)}(t)\leq \beta E^{b(m)}(t), \end{array}
%\end{equation}
\begin{equation}
    \begin{array}{llll}\label{Vn>alpha}
    \alpha E^{(m)}(t)  & \leq  V^{(m)}(t)  & \leq  \beta E^{(m)}(t), \quad t\geq t_0,\\
    \alpha E^{b(m)}(t) & \leq  V^{b(m)}(t) & \leq  \beta E^{b(m)}(t),
    \end{array}
\end{equation}
where $\alpha$ and $\beta$ are given by \eqref{Vpos}.
\begin{lemma}\label{lem_en}
Consider $V^{(m)}$ and $V^{b(m)}$ given by (\ref{Vn}) and
(\ref{Vbn}) respectively with  $\chi>0$ satisfying
\eqref{Phi0}. Assume there exist $\delta>0$ and $ T>0$  such that for
all $m=1,2,...$ and for all $ t\in[t_0, t_0+T]$ the %following
inequalities
\begin{equation}\label{lem_ena}
    \dot V^{(m)}(t)+2\delta V^{(m)}(t)\leq 0
\end{equation}
and
\begin{equation}\label{lem_enab}
    \dot V^{b(m)}(t)-2\delta V^{b(m)}(t)\geq 0
\end{equation}
hold along (\ref{en}) and (\ref{ebn})
respectively. Assume additionally that for some  $T^*\in(0,T)$
%\begin{equation}\label{lem_enb}
%\begin{array}{ll} %V^{(n)}(t)>0 \quad\Rightarrow
%V^{(m)}(t_0)e^{-2\delta T^*}\leq V^{b(m-1)}(t_0),\\
%%V^{b(n)}(t)>0\ \Rightarrow
%V^{b(m)}(t_0+T)e^{-2\delta T^*}\leq
%V^{(m)}(t_0+T).\end{array}\end{equation}
\begin{equation}
    \begin{array}{llll}\label{lem_enb}
        V^{(m)}(t_0)e^{-2\delta T^*} & \leq &  V^{b(m-1)}(t_0),\\
             V^{b(m)}(t_0+T)e^{-2\delta T^*} & \leq & V^{(m)}(t_0+T).
    \end{array}
\end{equation}
 Then  the iterative algorithm
    converges on $[t_0, t_0+T]$:
\begin{equation}\label{lem_enThena}
    V^{b(m)}(t_0)\leq q V^{b(m-1)}(t_0)\leq q^{m} V^{b(0)}(t_0) ,
\end{equation}
$q=e^{-4\delta (T-T^*)}$ is the convergence rate.

Moreover, for all $t\in [t_0, t_0+T]$ and $m=1,2,...$
\begin{equation}\label{lem_reg}\begin{array}{ll}
\max\{V^{(m)}(t), V^{b(m)}(t)\}\leq e^{2\delta T^*} V^{b(0)}(t_0).
\end{array}\end{equation}
\end{lemma}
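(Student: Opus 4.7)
The plan is to treat the two differential inequalities (\ref{lem_ena}) and (\ref{lem_enab}) as Gr\"onwall-type comparison estimates, producing monotone exponential bounds for $V^{(m)}$ and $V^{b(m)}$ across the window $[t_0,t_0+T]$, and then to splice these bounds with the coupling inequalities (\ref{lem_enb}) at the interfaces $t=t_0$ and $t=t_0+T$.

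First, integrating (\ref{lem_ena}) yields $V^{(m)}(t)\le e^{-2\delta(t-t_0)}V^{(m)}(t_0)$ on the window; in particular $V^{(m)}(t_0+T)\le e^{-2\delta T}V^{(m)}(t_0)$ and $V^{(m)}$ is nonincreasing. Symmetrically, (\ref{lem_enab}) is equivalent to $\frac{d}{dt}\bigl[e^{-2\delta t}V^{b(m)}(t)\bigr]\ge 0$, so propagating backward from $t_0+T$ gives $V^{b(m)}(t)\le e^{-2\delta(t_0+T-t)}V^{b(m)}(t_0+T)$; hence $V^{b(m)}$ is nondecreasing on the window and $V^{b(m)}(t_0)\le e^{-2\delta T}V^{b(m)}(t_0+T)$.

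Next, I would chain the four estimates around one full cycle of the iteration:
\[
\begin{aligned}
V^{b(m)}(t_0)
&\le e^{-2\delta T}V^{b(m)}(t_0+T)\\
&\le e^{2\delta T^{*}-2\delta T}V^{(m)}(t_0+T)\\
&\le e^{2\delta T^{*}-4\delta T}V^{(m)}(t_0)\\
&\le e^{4\delta T^{*}-4\delta T}V^{b(m-1)}(t_0),
\end{aligned}
\]
where the first step is the backward integration, the second is the second line of (\ref{lem_enb}), the third is the forward integration, and the fourth is the first line of (\ref{lem_enb}). Since the last exponent equals $-4\delta(T-T^{*})=\ln q$, this is exactly $V^{b(m)}(t_0)\le qV^{b(m-1)}(t_0)$, and induction on $m$ starting from $V^{b(0)}(t_0)$ delivers (\ref{lem_enThena}).

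For the uniform estimate (\ref{lem_reg}) I would use $q\le 1$ so that $V^{b(m-1)}(t_0)\le V^{b(0)}(t_0)$ for every $m$. Since $V^{(m)}$ is nonincreasing on the window, $V^{(m)}(t)\le V^{(m)}(t_0)\le e^{2\delta T^{*}}V^{b(m-1)}(t_0)\le e^{2\delta T^{*}}V^{b(0)}(t_0)$. For the backward observer,
\[
\begin{aligned}
V^{b(m)}(t)&\le V^{b(m)}(t_0+T)\le e^{2\delta T^{*}}V^{(m)}(t_0+T)\\
&\le e^{2\delta T^{*}-2\delta T}V^{(m)}(t_0)\le e^{4\delta T^{*}-2\delta T}V^{b(m-1)}(t_0),
\end{aligned}
\]
and the hypothesis $T^{*}<T$ gives $e^{2\delta T^{*}-2\delta T}\le 1$, which absorbs the extra factor and yields $V^{b(m)}(t)\le e^{2\delta T^{*}}V^{b(0)}(t_0)$. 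The only real obstacle is careful bookkeeping: the opposite sign of the cross term $\int_{\Omega}(x^{T}\nabla e)e_t\,dx$ in $V^{(m)}$ versus $V^{b(m)}$ prevents identifying these functionals at the interfaces even though the classical states are identified there, which is exactly why (\ref{lem_enb}) is stated with the slack factor $e^{-2\delta T^{*}}$; one has to verify that the two slack factors compound precisely with the integrated decay $e^{-2\delta T}$ from each sub-interval to produce the advertised contraction rate $q=e^{-4\delta(T-T^{*})}$.
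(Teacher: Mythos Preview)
Your proof is correct and follows essentially the same approach as the paper: integrate the differential inequalities to get $V^{(m)}(t_0+T)\le e^{-2\delta T}V^{(m)}(t_0)$ and $V^{b(m)}(t_0)\le e^{-2\delta T}V^{b(m)}(t_0+T)$, then chain these with the two coupling inequalities (\ref{lem_enb}) to obtain the contraction (\ref{lem_enThena}). The only cosmetic difference is in deriving (\ref{lem_reg}): the paper unwinds a single long chain $V^{(m+1)}(t)\le V^{(m+1)}(t_0)\le e^{2\delta T^*}V^{b(m)}(t_0)\le V^{b(m)}(t_0+T)\le\cdots\le V^{(1)}(t_0)\le e^{2\delta T^*}V^{b(0)}(t_0)$ directly, whereas you first invoke the already-proved contraction to get $V^{b(m-1)}(t_0)\le V^{b(0)}(t_0)$ and then bound $V^{(m)}(t)$ and $V^{b(m)}(t)$ separately; both arguments are equivalent.
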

\vspace{-0.2cm}
\begin{proof}
%Similar to \cite{fridman2013observers},
The inequalities (\ref{lem_ena}), (\ref{lem_enab}) yield
%\begin{equation}\label{lem_enexp}
$$\begin{array}{ll}
V^{b(m)}(t_0)\!\leq \!V^{b(m)}(t_0\!+\!T)\!e^{-2\delta T}\!, \\
V^{(m)}(t_0\!+\!T)\!\leq\!
V^{(m)}(t_0)\!e^{-2\delta T}.\end{array}$$ %\end{equation}
 Hence, (\ref{lem_enb})
implies \eqref{lem_enThena}:
$$\begin{array}{ll}
 V^{b(m)}(t_0)\leq
V^{b(m)}(t_0+T)e^{-2\delta T}\leq\!\!V^{(m)}(t_0\!+\!T)\sqrt{q}\\ %e^{-2\delta(T-\!T^*)}\\
\leq V^{(m)}(t_0)\sqrt{q}e^{-2\delta  T}
%e^{-4\delta \Delta T}V^{(n)}(t_0)e^{-2\delta T^*}\\
\leq V^{b(m-1)}(t_0)q. \end{array}$$
The bound (\ref{lem_reg}) follows from the following inequalities:
$$\begin{array}{llllll} V^{(m+1)}(t)\leq   V^{(m+1)}(t_0)\leq e^{2\delta T^*} V^{b(m)}(t_0)\\
\leq
V^{b(m)}(t_0+T)\leq V^{(m)}(t_0+T)e^{2\delta T^*}\\ \leq
V^{(m)}(t_0)\leq ... \leq V^{(1)}(t_0)
%e^{-4\delta \Delta T}V^{(n)}(t_0)e^{-2\delta T^*}
\leq e^{2\delta T^*}V^{b(0)}(t_0),\\
V^{b(m)}(t)\leq V^{b(m)}(t_0+T) %\leq V^{(n)}(t_0+T)e^{2\delta T^*}
%\leq V^{(n)}(t_0)
\leq e^{2\delta T^*}V^{b(0)}(t_0).\end{array}$$
\vspace{-0.2cm}
\end{proof}
We are in a position to formulate sufficient conditions
for the exact observability:

\begin{theorem}\label{Prop_ExactObs}
 Given positive tuning parameters $T^*$ and $
\delta $, let there exist positive constants $\chi$, $\lambda_1$ and
 $\lambda_2$ that satisfy the LMIs (\ref{Psi1_4}) and
%\begin{equation}\label{Phi}
%    \Phi\bydef
%       \begin{bmatrix}
%       -{1\over 2}[1-e^{-2\delta T^*}]+ \lambda_2 \frac{4}{\pi^2 n}& \sqrt{n}[1+e^{-2\delta T^*}]\chi &
%       0 \\ %{n-1\over 2}[1+e^{-2\delta T^*}]\chi        \\
%            {*}       & -{1\over 2}[1-e^{-2\delta T^*}] & {n-1\over 2}[1+e^{-2\delta T^*}]\chi         \\
%            {*}       & {*}        & -\lambda_2
%        \end{bmatrix}<0.
%\end{equation}
\begin{equation}
    \begin{array}{l}\label{Phi}
         \Phi\bydef
           \begin{bmatrix}
           \Phi_{11} & \sqrt{n}[1+e^{-2\delta T^*}]\chi & 0 \\
                {*}       & -\frac{1}{ 2}[1-e^{-2\delta T^*}] & \frac{n-1}{2}[1+e^{-2\delta T^*}]\chi         \\
                {*}       & {*}        & -\lambda_2
            \end{bmatrix}<0. \\
         \Phi_{11}= -\frac{1}{2}[1-e^{-2\delta T^*}]+ \lambda_2 \frac{4}{\pi^2 n}
    \end{array}
\end{equation}
Then
\newline
(i) the
system (\ref{Wave_nd_z_bound})-(\ref{Wave_nd_z_initial})
with the measurements (\ref{Wave_nd_z_meas})  is exactly observable in time $T^*$;
\newline
(ii) for all $\Delta T>0$
 the iterative algorithm with $T=T^*+\Delta T$
converges
\begin{equation}\label{thm_end}
    \begin{array}{ll}
        \int_{\Omega}\Big[
        |\nabla e^{b(m)}(x,t_0)|^2+[e^{b(m)}_t(x,t_0)]^2\Big]dx\\ \leq
        \frac{\beta}{ \alpha} q^m \int_{\Omega}\Big[ |\nabla z_{0}|^2(x)+z_1^2(x)\Big]dx,
    \end{array}
\end{equation}
where  $q=e^{-4 \delta  \Delta T}$,
and the following bound holds:
\begin{equation}\label{thm_ebound}
    \begin{array}{ll}
       max\Big \{\int_{\Omega}\left[
        |\nabla e^{b(m)}(x,t)|^2+[e^{b(m)}_t(x,t)]^2\right]dx,\\
        \int_{\Omega}\left[
        |\nabla e^{(m)}(x,t)|^2+[e^{(m)}_t(x,t)]^2\right]dx\Big \}
        \\ \leq
        \frac{\beta}{ \alpha} e^{2\delta T^*} \int_{\Omega}\Big[ |\nabla z_{0}|^2(x)+z_1^2(x)\Big]dx,\\
         t\in[t_0, t_0+T].
    \end{array}
\end{equation}
 Here  $\alpha$ and $\beta$
are given by (\ref{Vpos}).
\end{theorem}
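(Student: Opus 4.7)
The plan is to apply Lemma~\ref{lem_en} to the iterative forward/backward error systems \eqref{en}--\eqref{ebn}. Part (ii) is then an immediate rewriting of the conclusions \eqref{lem_enThena}--\eqref{lem_reg} of that lemma via the energy bounds \eqref{Vn>alpha}: \eqref{lem_enThena} gives \eqref{thm_end} with $q=e^{-4\delta\Delta T}$, and \eqref{lem_reg} gives \eqref{thm_ebound}. Part (i) will then follow by letting $\Delta T\downarrow 0$: each positive $\Delta T$ already yields recovery of $[z_0,z_1]$ from measurements on $[t_0,t_0+T^*+\Delta T]$, so $T^*$ is an upper bound on the minimum observability time. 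Hence the proof reduces to verifying the three hypotheses of Lemma~\ref{lem_en}: forward decay $\dot V^{(m)}+2\delta V^{(m)}\leq 0$, backward anti-decay $\dot V^{b(m)}-2\delta V^{b(m)}\geq 0$, and the two matching inequalities \eqref{lem_enb}.

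For the forward decay I would rerun the computation in the proof of Theorem~\ref{wave_nd_exp} on \eqref{en}. The only change is $g\leadsto g^{(m)}$, but \eqref{gm} together with \eqref{g1} gives the uniform bound $|g^{(m)}|\leq g_1$, so the Green-identity derivation \eqref{nVterm}--\eqref{identitymul}, the completion-of-squares on $\Gamma_N$, and the Wirtinger $S$-procedure addition \eqref{spr} all reproduce verbatim, and feasibility of \eqref{Psi1_4} delivers the estimate. For the backward anti-decay I would change variable to reversed time $\tau=t_0+T-t$: in \eqref{ebn} the $+k$ boundary coefficient becomes $-k$ because $e_t=-e_\tau$, and the minus sign in front of the cross term of $V^{b(m)}$ in \eqref{Vbn} becomes a plus for the same reason. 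So in $\tau$-coordinates $V^{b(m)}$ and its system have exactly the structure of $V^{(m)}$ and \eqref{en}; the forward analysis applies under the same LMIs \eqref{Psi1_4}, and returning to $t$ flips the inequality into anti-decay.

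For the matching at $t_0$ the key observation is that $e^{(m)}(\cdot,t_0)=e^{b(m-1)}(\cdot,t_0)$ pointwise, with the same time derivative, so the forms $V^{(m)}(t_0)$ and $V^{b(m-1)}(t_0)$, evaluated on this common function $e$, differ only in the sign of the cross term $B=\chi\int[2x^T\nabla e+(n-1)e]e_t\,dx$. Writing $V^{(m)}(t_0)e^{-2\delta T^*}-V^{b(m-1)}(t_0)$ then yields a coefficient $(e^{-2\delta T^*}-1)$ in front of the nonnegative symmetric piece (energy plus boundary) and $(1+e^{-2\delta T^*})$ in front of $B$. The symmetric piece contributes $-\tfrac{1-e^{-2\delta T^*}}{2}\int(|\nabla e|^2+e_t^2)dx$ plus a nonpositive boundary term that I discard, and bounding $|x^T\nabla e|\leq\sqrt{n}|\nabla e|$ as in \eqref{CSchw} upper-bounds the cross-term contribution by $\chi(1+e^{-2\delta T^*})\int[2\sqrt n|\nabla e||e_t|+(n-1)|e||e_t|]dx$. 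Adding the nonnegative Wirtinger slack \eqref{spr} with multiplier $\lambda_2$ collapses the sum into $\int_\Omega\eta^T\Phi\eta\,dx$, with $\eta=\mathrm{col}\{|\nabla e|,|e_t|,|e|\}$ and $\Phi$ exactly the matrix in \eqref{Phi}; strict negativity $\Phi<0$ yields the matching. The inequality at $t_0+T$ is structurally identical, with the roles of forward and backward Lyapunov functions swapped, and produces the same $\Phi$.

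The continuous-dependence estimate \eqref{def_exact} required by (i) is obtained by running the same LMI argument on the difference of two iterative schemes driven by distinct measurements $y,\bar y$: the difference between the corresponding observer pairs satisfies an integral input-to-state-stability estimate with input $y-\bar y$ on $\Gamma_N$, exactly as in the 1-D result of \cite{fridman2013observers}. The main technical obstacle is the sign bookkeeping in the backward step and checking that a single feasible triple $(\chi,\lambda_1,\lambda_2)$ of \eqref{Psi1_4}--\eqref{Phi} simultaneously validates all three hypotheses of Lemma~\ref{lem_en}; once that is verified, the rest is a straightforward chain of substitutions into Lemma~\ref{lem_en} and \eqref{Vn>alpha}.
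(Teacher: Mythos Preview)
Your overall scheme is the paper's: verify the three hypotheses of Lemma~\ref{lem_en} (forward decay via the computation of Theorem~\ref{wave_nd_exp}, backward anti-decay via time reversal, matching via the Cauchy--Schwarz/Wirtinger $S$-procedure reduction to the matrix $\Phi$), then read off (ii) from \eqref{lem_enThena}--\eqref{lem_reg} and \eqref{Vn>alpha}. The matching computation you describe is exactly what the paper does.

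There is one genuine gap in your treatment of (i). Proving convergence for every $T=T^*+\Delta T$ with $\Delta T>0$ only gives exact observability on intervals \emph{strictly longer} than $T^*$; a limit $\Delta T\downarrow 0$ does not by itself yield observability on $[t_0,t_0+T^*]$, since the convergence rate $q=e^{-4\delta\Delta T}\to 1$. The paper fixes this not by a limit but by using the \emph{strictness} of $\Phi<0$: strictness allows \eqref{lem_enb} to hold with $T^*$ replaced by $T^*-\Delta T$ for some small $\Delta T>0$, so Lemma~\ref{lem_en} applies with observation window $T=T^*$ and positive gap $T-(T^*-\Delta T)=\Delta T$. You should replace the limiting argument by this strictness argument. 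A related omission: Lemma~\ref{lem_en} requires $\Phi_0>0$, which you do not verify; the paper notes that feasibility of \eqref{Phi} implies feasibility of \eqref{Phi0}.

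For the continuous-dependence clause \eqref{def_exact} the paper does more than invoke the 1-D analogue: the difference of two observer sequences satisfies \eqref{en}--\eqref{ebn} with boundary disturbance $w=k(y-\bar y)$ on $\Gamma_N$, and one must establish the perturbed Lyapunov inequalities \eqref{ISS}. The paper derives explicit sufficient conditions \eqref{ISS0} and checks they are feasible for large $r,\gamma$ whenever $\Psi_1<0,\ \Psi_2<0$; again strictness (now of both \eqref{Phi} and the relevant pieces of \eqref{Psi1_4}) is used to telescope down to \eqref{def_exact}. Your sketch points the right way but skips this n-D computation.
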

\begin{proof}
(i) From Theorem \ref{wave_nd_exp} it follows that LMIs
\eqref{Psi1_4} yield  (\ref{lem_ena}).  By the similar derivations,
LMIs \eqref{Psi1_4} imply  (\ref{lem_enab}) for the backward system.
Taking into account that $e^{(m)}(x,t_0+T)=e^{b(m)}(x,t_0+T)$ and
$e^{(m)}_t(x,t_0+T)=e^{b(m)}_t(x,t_0+T)$, the bound \eqref{Vbound2} and the n-D Wirtinger inequality
we obtain for some $\lambda_2>0$
%\begin{equation}\label{DelV}
\begin{equation*}
    \begin{array}{l}
         V^{b(m)}(t_0+T)e^{-2\delta T^*}-V^{(m)}(t_0+T)\\
        \quad =
        \frac{1}{ 2}[-1+e^{-2\delta T^*}]\Big\{\int_{\Omega} [(e^{(m)}_t)^2\\
        \quad +\vert\nabla
        e^{(m)}\vert^2]dx+\chi k(n-1)\int_{\Gamma_N} (e^{(m)})^2 d \Gamma\Big\}\\
        \quad  \! - \!  \chi [1+e^{ \! - \! 2\delta T^*}]  \int_{\Omega}
        \left[ 2  (x^T \cdot \nabla e^{(m)} ) +  (n-1)e \right] e_t^{(m)}  dx\\
        \quad \leq \frac{1}{ 2}[-1+e^{-2\delta T^*}]\int_{\Omega} [(e_t^{(m)})^2+\vert\nabla
        e^{(m)}\vert^2]dx\\
        \quad +\chi [1+e^{-2\delta T^*}] \int_{\Omega} \Big[ 2\sqrt{n}|\nabla e^{(m)}| + \\
        \quad +(n-1)|e^{(m)}|\Big] |e^{(m)}_t |\!dx\\
        \quad  \! + \! \lambda_2\int_{\Omega} \left[ \frac{4}{\pi^2 n} |\nabla e^{(m)}|^2 \! - \! (e^{(m)})^2 \right] dx
         \! \leq  \! \int_{\Omega}\eta_2^T\Phi\eta_2dx \! \leq  \!  0,
    \end{array}
\end{equation*}
where
\begin{equation}\label{eta2m}
    \begin{array}{l}
        \eta_2=
        col \{|\nabla e^{(m)}(x,t)|,|e_{t}^{(m)}(x,t)|,|e^{(m)}(x,t)|\} %\Big\rvert_{t=t_0+T},
    \end{array}
\end{equation}
and where $t=t_0+T$,
 if \eqref{Phi} is feasible.
Similarly \eqref{Phi} guarantees $V^{(m)}(t_0)e^{-2\delta T^*}\leq V^{b(m-1)}(t_0)$.
The feasibility of the LMI \eqref{Phi} yields the feasibility of
 \eqref{Phi0}, i.e. the positivity of
$V^{(m)}$ and $V^{b(m)}$.
Moreover, the strict LMI \eqref{Phi} guarantees \eqref{lem_enb} with $T^*$
changed by $T^*-\Delta T$, where $\Delta T>0$ is small enough,
implying due to Lemma \ref{lem_en} the convergence of the iterative algorithm with $T=T^*$.

To prove the exact observability in time $T^*$, consider  initial states
$\zeta(t_0)\in {\mathcal H}$ and  $\bar
\zeta(t_0)\in {\mathcal H}$ of (\ref{Wave_nd_z_bound})-(\ref{Wave_nd_z_initial}) that lead to the measurements $y(x,t)$ and $
\bar y(x,t)$ and to the corresponding forward and backward  observers $z^{(m)}, z^{b(m)} $ and $ \bar
z^{(m)}, \bar z^{b(m)}$. Note that  $ \bar
z^{(m)}, \bar z^{b(m)}$ satisfy (\ref{zn}) and (\ref{zbn}), where $z^{(m)}, z^{b(m)} $  and $y$ are replaced by $ \bar
z^{(m)}, \bar z^{b(m)}$ and $\bar y$.
The resulting $ e^{(m)}=z^{(m)}-\bar z^{(m)}$, $ e^{b(m)}=z^{b(m)}-\bar z^{b(m)}$
satisfy (\ref{en}), (\ref{ebn}) with the perturbed boundary conditions at $x\in \Gamma_N$:
\begin{equation}\label{bcw}
    \begin{array}{l}
        \frac{\partial}{\partial\nu}e^{(m)} \! =  \! -ke_t^{(m)}+w, \; w\bydef k[y(x,t)-\bar y(x,t)],\\
        \frac{\partial}{\partial\nu}e^{b(m)} \! =  \! ke_t^{b(m)}-w, \; x\in \Gamma_N,\ t\geq t_0.
    \end{array}
\end{equation}
Let $V^{(m)}$ and $V^{b(m)}$ be defined by (\ref{Vn}) and
(\ref{Vbn}). LMI (\ref{Phi}) implies inequalities
(\ref{lem_enb}).

We will show next that the feasibility of \eqref{Psi1_4}
 implies
\begin{equation}
    \begin{array}{l}
        \label{ISS}
        \dot { V}^{(m)}(t) +2\delta  V^{(m)}(t)-\gamma \int_{\Gamma_N}w^2d\Gamma \leq 0,\\
        \dot { V}^{b(m)}(t) -2\delta  V^{b(m)}(t)+\gamma \int_{\Gamma_N}w^2d\Gamma\geq 0
    \end{array}
\end{equation}
for $t\geq t_0$ and some $\gamma>0$.
Taking into account $w$-term  in \eqref{bcw},
by the arguments of Theorem \ref{wave_nd_exp}
 we have
\begin{equation*}
    \begin{array}{l}
       \dot{E}^{(m)}(t) =  \int_{\Gamma_N}\left[-k \left(e_t^{(m)}\right)^2+e_t^{(m)}w\right] d \Gamma \\
        \quad + g_1 \int_\Omega |e^{(m)}| |e_t^{(m)}| dx,
     \end{array}
\end{equation*}
    and
\begin{equation*}
    \begin{array}{l}
     \frac{d}{dt} \left \{ \int_{\Omega} [2x^T\nabla e^{(m)}
    +(n-1) e^{(m)}] e_t^{(m)}  dx\right\}\\
     \quad =-\int_{\Omega}\{ (e_t^{(m)})^2+\vert\nabla
    e^{(m)}\vert^2 + [2x^T\nabla e^{(m)}\\
    \quad +(n-1) e^{(m)}] g e^{(m)} \} dx
    \\
    \quad-\int_{\Gamma_N}\left[\vert\nabla e^{(m)})\vert^2 +2 x^T\nabla e^{(m)}[ke_t^{(m)}-w]\right ] d\Gamma\\
    \quad +\int_{\Gamma_N}\left[(e_t^{(m)})^2-(n-1)e [k e_t^{(m)}-w]\right]d\Gamma.
     \end{array}
\end{equation*}
%\begin{flalign*}
%&   \frac{d}{dt} \left[ \int_{\Omega} [2x^T\nabla e^{(m)})
%        +(n-1) e^{(m)})] e_t^{(m)})  dx \right]\\
%&   \quad =  -  \int_{\Omega} \left[ (e_t^{(m)})^2+\vert\nabla
%        e^{(m)}\vert^2 \right ] dx \\
%&   \quad -  \int_{\Omega} \left[\ [2(x^T \cdot  \nabla e^{(m)})
%        +  (n  -  1) e^{(m)}] g e^{(m)} \right ]   dx\\
%&   \quad +\int_{\Gamma_N}\left[(e_t^{(m)})^2-(n-1)e [k e_t^{(m)}-w]\right]d\Gamma\\
%&   \quad -\int_{\Gamma_N}\left[\vert\nabla e^{(m)}\vert^2 +2 ( x^T \cdot \nabla e^{(m)})[ke_t^{(m)}-w]\right ] d\Gamma
%\end{flalign*}
Then after bounding and completion of squares we find
\begin{equation*}
    \begin{array}{l}
          \frac{d}{dt} V^{(m)}(t)\! +\! 2 \delta V^{(m)}(t) \\
        \quad \leq
          \Psi_1 \int_{\Gamma_N}  (e_t^{(m)})^2 d \Gamma
          \! +  \! \int_\Omega  \!\eta_2^T \!\Psi_2  \eta_2 dx \\
         \quad  +\int_{\Gamma_N}\Big\{|e_t^{(m)}||w|+\chi(n-1)|e^{(m)}||w|+ \\
         \quad + \chi k^2 n\Big[2 |e^{(m)}_t||w|+w^2\Big]\Big\} d \Gamma
         \leq 0,
     \end{array}
\end{equation*}
where $\eta_2$ is given by \eqref{eta2m}.
By Young's inequality with some $r>0$ and by \eqref{lambda2}
\begin{equation*}
    \begin{array}{l}
        \chi(n-1)\int_{\Gamma_N}|e^{(m)}||w| d \Gamma\leq
        \frac{\chi(n-1) }{ 2r}\int_{\Gamma_N}(e^{(m)})^2d\Gamma\\
        +
        {\chi(n-1)r\over 2}\int_{\Gamma_N}w^2d\Gamma\\
        \leq \frac{\chi(n-1) }{ 2r}\int_{\Omega}|\nabla e^{(m)}|^2dx+
        {\chi(n-1)r \over 2}\int_{\Gamma_N}w^2d\Gamma
    \end{array}
\end{equation*}
%\begin{flalign*}
%&   \chi(n-1)\int_{\Gamma_N}|e^{(m)}||w| d \Gamma \\
%&   \quad   \leq {\chi(n-1) \over r}\int_{\Gamma_N}(e^{(m)})^2d\Gamma+
%            {\chi(n-1) \ r}\int_{\Gamma_N}w^2d\Gamma\\
%&   \quad   \leq {\chi(n-1) \over r}\int_{\Omega}|\nabla e^{(m)}|^2dx+
%            {\chi(n-1) \ r}\int_{\Gamma_N}w^2d\Gamma
%\end{flalign*}
Then the first inequality \eqref{ISS} holds if
\begin{equation}\label{ISS0}
    \begin{array}{l}
       \begin{bmatrix}
            \Psi_1 & \chi\Big(\frac{1}{ 2}+k^2n\Big)\\
            {*} &-\gamma+\chi k^2 n+{\chi(n-1)r \over 2}
        \end{bmatrix} <0, \\
       \Psi_2+\frac{\chi(n-1) }{ 2r}[1 \ 0 \ 0]^T[1 \ 0 \ 0]<0.
    \end{array}
\end{equation}
It is easy to see that
the latter inequalities are feasible for large
enough $r$ and $\gamma$ if $\Psi_1<0$ and $\Psi_2<0$, i.e. if  LMIs \eqref{Psi1_4} are
satisfied. Then, by the comparison principle (see e.g. \cite{Khalil}),
\begin{equation*}\label{ISSV}
    \begin{array}{ll}
    V^{(m)}(t)\leq e^{ \! - \! 2\delta (t \! - \! t_0)}V^{(m)}(t_0) \! + \! \gamma \int^{t}_{t_0}\int_{\Gamma_N}|w(x,s)|^2d\Gamma ds.
    \end{array}
\end{equation*}

Similarly, LMIs (\ref{Psi1_4}) guarantee the second inequality \eqref{ISS}
 for  large enough
$\gamma>0$, and, thus,
\begin{equation*}%\label{ISSb}
    \begin{array}{l}
        V^{b(m)}(t)\geq e^{2\delta (t-t_0)}V^{b(m)}(t_0)\\
        -\gamma
        \int^{t}_{t_0}\!\int_{\Gamma_N}\!e^{2\delta (t-s)}|w(x,s)|^2d\Gamma ds.
    \end{array}
\end{equation*}
% and, thus,
% \begin{equation}\label{ISSb}
% V^{b(n)}(t_0)\leq e^{-2\delta (t-t_0)}V^{b(n)}(t)+\gamma  \int^{t}_{t_0}%e^{-2\delta (s-t_0)}
% |w(s)|^2ds.\end{equation}
Note that the strict inequalities \eqref{Phi} guarantee
\eqref{lem_enb}
with $\delta$ changed by $\delta+\delta_0$ for small enough $\delta_0>0$.
Therefore,
\begin{equation*}
    \begin{array}{llllll} V^{b(m)}(t_0)\leq
    e^{-2(\delta+\delta_0) T^*} V^{b(m)}(t_0+T^*)\\ \quad+\gamma \int^{t_0+T^*}_{t_0}\int_{\Gamma_N}|w(x,s)|^2d\Gamma ds\\
    \leq
    e^{-2\delta_0 T^*}V^{(m)}(t_0+T^*)\\
    \quad+\gamma \int^{t_0+T^*}_{t_0}\int_{\Gamma_N}|w(x,s)|^2d\Gamma ds\\
    %\leq e^{-2(\delta+2\delta_0) T^*}V^{(m)}(t_0)+(e^{-2\delta_0 T^*}+1)\gamma \int^{t_0+T^*}_{t_0}\int_{\Gamma_N}|w(s)|^2d\Gamma ds\\
    \leq
    e^{-4\delta_0 T^*} V^{b(m-1)}(t_0)\\ \quad +(e^{-2\delta_0 T^*}+1)\gamma \int^{t_0+T^*}_{t_0}|\int_{\Gamma_N}|w(x,s)|^2d\Gamma ds.
    \end{array}
\end{equation*}
%\begin{flalign*}
%&   V^{b(m)}(t_0) &\\
%&   \; \leq
%e^{-2(\delta+\delta_0) T^*} V^{b(m)}(t_0+T^*) +\gamma \int^{t_0+T^*}_{t_0}\int_{\Gamma_N}|w(s)|^2d\Gamma ds &\\
%&  \;  \leq
%e^{-2\delta_0 T^*}V^{(m)}(t_0+T^*)+\gamma \int^{t_0+T^*}_{t_0}\int_{\Gamma_N}|w(s)|^2d\Gamma ds &\\
%&   \; \leq
%e^{-4\delta_0 T^*} V^{b(m-1)}(t_0) &\\
%&  \;  +(e^{-2\delta_0 T^*}+1)\gamma \int^{t_0+T^*}_{t_0}\int_{\Gamma_N}|w(s)|^2d\Gamma ds. &
%\end{flalign*}
We arrive at
\begin{equation*}
    \begin{array}{llll}
        \alpha\int_0^{1}\Big[
        [e^{b(m)}_{x}(x,t_0)]^2+[e^{b(m)}_t(x,t_0)]^2\Big]dx\\
        \quad \leq
         V^{b(m)}(t_0)\leq (e^{-4\delta_0 T^*})^2 V^{b(m-2)}(t_0)\\
         \quad +(e^{\!-\!6\delta_0 T^*}\!+\!e^{\!-\!4\delta_0 T^*}\!+\!e^{\!-\!2\delta_0 T^*}\!+\!1)\gamma \!\int^{t}_{t_0}\!\int_{\Gamma_N}|w(s)|^2d\Gamma ds\\
         \quad \leq
         e^{-4m\delta_0 T^*} V^{b(0)}(t_0)+%{\gamma\over 1-e^{-2\delta_0 T^*}}
         \alpha C \int^{t_0+T^*}_{t_0}\int_{\Gamma_N}|w(s)|^2d\Gamma ds
     \end{array}
\end{equation*}
%\begin{flalign*}
%& \alpha\int_0^{1}\Big[
%[e^{b(m)}_{x}(x,t_0)]^2+[e^{b(m)}_t(x,t_0)]^2\Big]dx &\\
%& \; \leq
% V^{b(m)}(t_0)\leq (e^{-4\delta_0 T^*})^2 V^{b(m-2)}(t_0) &\\
%& \; +(e^{-6\delta_0 T^*}+e^{-4\delta_0 T^*}+e^{-2\delta_0 T^*}+1)\gamma \int^{t}_{t_0}\int_{\Gamma_N}|w(s)|^2d\Gamma ds &\\
%& \;  \leq \!
% (e^{-4\delta_0 T^*}) V^{b(0)}(t_0) \! + \! {\gamma\over 1-e^{-2\delta_0 T^*}}\int^{t_0+T^*}_{t_0} \! \int_{\Gamma_N} \! |w(s)|^2d\Gamma ds &
%\end{flalign*}
 which implies (\ref{def_exact}), where
% $\|\lim_{m\to \infty}\zeta^m_0-\lim_{m\to \infty}\bar \zeta^m_0\|_{\mathcal{H}}=
%\lim_{m\to \infty}\|\zeta^m_0-\bar \zeta^m_0\|_{\mathcal{H}}$ and
$C=\frac{\gamma}{ \alpha[1-e^{-2\delta_0 T^*}]}$.

(ii) follows from \eqref{lem_enThena}, \eqref{lem_reg} and \eqref{Vn>alpha}.
\end{proof}

\begin{remark}\label{rem_ISS}
As a by-product, we have derived new LMI conditions \eqref{ISS0}
for  input-to-state stability of the n-D wave equation \eqref{en} with
the perturbed boundary condition on $\Gamma_N$ as in \eqref{bcw}.
\end{remark}

\begin{remark}
Note that for $n=1$ and $g=0$ the LMIs of Theorem \ref{Prop_ExactObs}
are equivalent to the corresponding conditions of \cite{fridman2013observers} that
are not conservative
(in the sense that they lead to the analytical value of  the minimal observability time $T^*_{an}$).
However, for $n=2$ and $g=0$ the conditions of
Theorem \ref{Prop_ExactObs}
lead to an upper bound on $T^*_{an}$ only %, though this bound
% is not far
% from  $T^*_{an}$
%$T^*_{an}$, but greater than it
(see Example \ref{ex_2D} below).
This mirrors the conservatism of the conditions for $n>1$.
%For $n=1$ the derived LMIs recover the analytical value $T^*_{an}$.
\end{remark}

\begin{example}\label{ex_2D}
Consider \eqref{Wave_nd_z_bound}-\eqref{Wave_nd_z_initial}, where $n=2$ with the values of $g_1$ as given in Table \ref{table:nonlin}.
 We use
  the sequence of forward and backward observers \eqref{zn} and \eqref{zbn}
with $k=1$.
By verifying the conditions  of Theorem \ref{Prop_ExactObs},  we find
 the minimal values of $T^*$ and the corresponding $\delta$ for the convergence
of the iterative algorithm and, thus, for the exact observability.
Note that for $g_1 = 0$ the observability time is $T^* = 3.28$ , which is not too far
from the analytical value
 $2 \sqrt{2}\approx 2.82 $.
 For simulation results in the linear case see Example 2 of \cite{GeorgeAut10}.
 %Recall that in the one-dimensional case the corresponding conditions led
% to the analytical value.
%The difference lies in the additional evaluations of the LMI terms, which are not present in the one-dimensional case.

\begin{table}[ht]
\caption{Nonlinearity vs. minimal observability time}
\centering
\begin{tabular}{ c  c  c  }
\hline \hline
$g_1$ & $\delta$ & $T^*$  \\ [0.5ex]
\hline
0 & 0.0001 & 3.28 \\
%0.001 & 0.001 & 4.9 \\
0.01  & 0.01 & 4.3 \\
0.1   & 0.01 & 12.2  \\
0.3 & 0.01 & 38 \\ [1ex]
\hline
\end{tabular}
\label{table:nonlin}
\end{table}

\end{example}

\section{Regional observability of  1-D wave equation with locally Lipschitz nonlinearity }\label{regional_obs_section}

In this section we consider 1-D wave equation \eqref{Wave_nd_z_bound}, where $\Omega=[0,1]$:
%,
%under the measurements \eqref{Wave_nd_z_meas}. The 1-D system has the form
\begin{eqnarray}\label{Wave_nd_z_nl_bound}
    \nonumber  z_{tt}(x,t) =z_{xx}(x,t) + f(z,x,t),  &   \qquad x\in [0,1], \ t>t_0, \\
    z(0,t)  = 0, \qquad z_x(1,t)=0, &
\end{eqnarray}
whereas the measurements are given by
\begin{equation}\label{1dmeas}
y(t)=z_t(1,t), \quad t\in[t_0, t_0+T].
\end{equation}

 Assume that $f(0,x,t)\equiv 0$ and that $f$ is {\it locally Lipzchitz} in the first argument
  uniformly on the  others.
The latter means that we can find a $ d>0 $ such that
\begin{equation}\label{locL}
  %|z| \leq d_0 ^2 \qquad \Longrightarrow
  | f_z | \leq g_1 \qquad \forall |z|\leq d,\ x\in[0,1],\ t\geq t_0 .
\end{equation}
We present
 \begin{equation}\begin{array}{ll}
\label{f1}
 f(z,x,t)=f_1 z, \quad f_1=\int_0^1 f_{z}(\theta z,x,t) d\theta.\\
 \end{array}\end{equation}

 Recall that in 1-D case
${\mathcal
H}={\mathcal H}_{\Gamma_D}^{1}(0,1)\times L^2(0,1)$, where
$${\mathcal H}^{1}_{\Gamma_D}(0,1) =\left\{\zeta_0\in {\mathcal H}^{1}(0,1)\ \bigg| \ \zeta_0(0)=0\right \}$$
and
$$ %\begin{equation}
\begin{array}{lll} {\mathcal
D}({\mathcal A})=
\Big\{&\!\!(\zeta_0,\zeta_1)^T\in \!{\mathcal H}^{2}(0,1)\!\bigcap{\mathcal
H}_{\Gamma_D}^{1}(0,1)\!\times \!{\mathcal H}_{\Gamma_D}^{1}(0,1) \\
& \bigg|\
\zeta_{0x}(1)=0 \Big\}.
\end{array}$$
Consider a region of initial conditions defined by
\begin{equation}\label{region}
    \begin{array}{l}
    {\mathcal X}_{d_0}= \left\{\Big[z_0,z_1\Big]^T\in {\mathcal H}\ \bigg|  \int_0^1\left [{z_0}^2_x  + z_1^2 \right]dx \leq
        d_0^2 \right\},
   \end{array}
\end{equation}
where $d_0>0$ is some constant.
 We are looking for an estimate ${\mathcal X}_{d_0}$ (with $d_0$ as large as possible) on the region of initial conditions,
    for which the iterative
  algorithm defined in
    Section \ref{sec_ExObs}
   converges. This gives an estimate on the region of exact observability, where the initial conditions of the system can be recovered
 uniquely from the measurements on the interval $[t_0,t_0+T]$.

 The convergence of the iterative algorithm in Theorem \ref{Prop_ExactObs}
   has been proved for the  forward  and the backward
   error systems \eqref{en} and \eqref{ebn} with globally Lipschitz
   nonlinearities given by \eqref{gm}
   subject to
\begin{equation}
    \begin{array}{l}
         |f_{z}(z + (\theta-1)  e^{(m)} ,x,t)|  \leq g_1, \\
         |f_{z}(z + (\theta-1)  e^{b(m)},x,t)|  \leq g_1, \\
         \quad \forall t\in[t_0, t_0+T],\; x,  \theta\in [0,1], \;  z, e^{(m)}, e^{b(m)}\in \bl R.
    \end{array}
\end{equation}
For the locally bounded nonlinearity as in \eqref{locL} we have  to find a region $ {\mathcal X}_{d_0}$
of initial conditions  starting from which  solutions of \eqref{Wave_nd_z_nl_bound}, \eqref{en} and
\eqref{ebn} satisfy the bound
\begin{equation}
    \begin{array}{l}\label{gml}
[z_0, z_1]^T\in  {\mathcal X}_{d_0}\Rightarrow    |z(x,t)+(\theta - 1) e^{(m)}(x,t) | \leq  d,\\
   \hspace{2.8cm} |z(x,t)+(\theta - 1) e^{b(m)}(x,t) | \leq  d,\\
    \hspace{2.8cm}  \forall t\in[t_0, t_0+T],\ x, \theta\in [0,1].
    \end{array}
\end{equation}
The latter implication yields
%\begin{equation}\begin{array}{l}\label{g1f1}
%\max\{|f_1|, |g^{(m)}|,
%  |g^{b(m)}|\}\leq g_1, \forall t\in[t_0, t_0+T],\ x, \theta\in [0,1]
% \end{array}
%\end{equation}
\begin{equation}\label{g1f1}
    \begin{array}{l}
[z_0, z_1]^T\in  {\mathcal X}_{d_0}\Rightarrow \max\{|f_1|, |g^{(m)}|,
  |g^{b(m)}|\}\leq g_1, \\
  \forall t\in[t_0, t_0+T],\ x, \theta\in [0,1]
    \end{array}
\end{equation}
We will employ Sobolev's inequality
\begin{equation}\label{1Dbound}
    \begin{array}{ll}
max_{x\in[0,1]}z^2(x,t)\leq \int_0^1z_x^2(x,t)dx,\ t\geq t_0 %\in[t_0,t_0+T]
%\ max_{x\in[0,1]}(e^{(m)})^2\leq \int_0^1(e^{(m)}_x)^2dx,\
%max_{x\in[0,1]}(e^{b(m)})^2\leq \int_0^1(e^{b(m)}_x)^2dx
    \end{array}
\end{equation}
that holds since $z\Big\rvert_{x=0}=0$, and similar bounds on $e^{(m)}$ and $e^{b(m)}$.
In order to guarantee \eqref{gml} we  start with a bound on the solutions
of \eqref{Wave_nd_z_nl_bound}.
 Since this system is not stable we  give a simple energy-based
  bound
  on the exponential growth of $z$.
Define the energy %(Lyapunov) function
\begin{equation*}\label{Lyapunov_nd_z}
    \begin{array}{ll}
    E_{z_{eq}}(t) =
    \frac{1}{2}\int_{0}^1
    \Big( z_x^2  + z_t^2 \Big)
    dx.
    \end{array}
\end{equation*}

\begin{proposition}\label{exp_growth_bound}
    Consider \eqref{Wave_nd_z_nl_bound} with $f(0,x,t)\equiv 0$
    subject to
    $|f_z|\leq g_1$ for all $(z,x,t)\in \bl R^{3}$.
    Then solutions of this system satisfy  the following inequality:
\begin{equation*}
\label{Vzbound}E_{z_{eq}}(t)\leq e^{\frac{2g_1}{\pi}
(t-t_0)}E_{z_{eq}}(t_0), \quad t\geq t_0.\end{equation*}
\end{proposition}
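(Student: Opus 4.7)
The plan is a standard energy estimate with Grönwall, the only delicate point being to produce exactly the constant $2g_1/\pi$ rather than something larger like $g_1$.

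First I would differentiate the energy in time to get
\begin{equation*}
  \dot E_{z_{eq}}(t) = \int_0^1 \bigl( z_x z_{xt} + z_t z_{tt}\bigr)\,dx.
\end{equation*}
Integrating the first term by parts in $x$ gives the boundary term $[z_x z_t]_0^1$, which vanishes because $z(0,t)=0$ implies $z_t(0,t)=0$ and $z_x(1,t)=0$. Substituting the PDE $z_{tt}=z_{xx}+f(z,x,t)$ into the second term then yields the clean identity
\begin{equation*}
  \dot E_{z_{eq}}(t) = \int_0^1 z_t(x,t)\, f\bigl(z(x,t),x,t\bigr)\,dx.
\end{equation*}

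Next, the hypothesis $f(0,x,t)\equiv 0$ combined with the global bound $|f_z|\le g_1$ gives $|f(z,x,t)|\le g_1|z|$ pointwise by the mean value theorem. Hence
\begin{equation*}
  \dot E_{z_{eq}}(t) \le g_1 \int_0^1 |z_t|\,|z|\,dx
   \le g_1\,\|z_t\|_{L^2}\|z\|_{L^2}
\end{equation*}
by Cauchy--Schwarz. Here is the step that pins down the constant: since $z(0,t)=0$, the 1-D Wirtinger inequality (Lemma \ref{lem_ineq} with $n=1$) gives $\|z\|_{L^2}\le \tfrac{2}{\pi}\|z_x\|_{L^2}$, so
\begin{equation*}
  \dot E_{z_{eq}}(t) \le \tfrac{2g_1}{\pi}\,\|z_t\|_{L^2}\|z_x\|_{L^2}
   \le \tfrac{g_1}{\pi}\bigl(\|z_t\|_{L^2}^2+\|z_x\|_{L^2}^2\bigr)
   = \tfrac{2g_1}{\pi}\,E_{z_{eq}}(t),
\end{equation*}
where the middle step uses $2ab\le a^2+b^2$.

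Finally, Grönwall's inequality applied to the differential inequality $\dot E_{z_{eq}}\le \tfrac{2g_1}{\pi}E_{z_{eq}}$ immediately gives the claimed bound $E_{z_{eq}}(t)\le e^{2g_1(t-t_0)/\pi}E_{z_{eq}}(t_0)$. The calculation above is formally valid for classical solutions in $\mathcal D(\mathcal A)$; by density of $\mathcal D(\mathcal A)$ in $\mathcal H$ and continuous dependence of mild solutions on initial data, the estimate extends to all mild solutions. The only real obstacle is the bookkeeping that keeps the constant sharp: using $2|z_t||z|\le z_t^2+z^2$ directly and then Wirtinger on $\int z^2$ would give $g_1(1+\tfrac{4}{\pi^2})$ instead of $\tfrac{2g_1}{\pi}$, so one must first apply Cauchy--Schwarz globally, then Wirtinger, then AM--GM.
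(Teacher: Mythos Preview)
Your proof is correct and follows essentially the same approach as the paper: differentiate the energy, integrate by parts using the boundary conditions, bound the nonlinear term by $g_1\int|z_t||z|\,dx$, and then use the 1-D Wirtinger inequality to close the estimate with the sharp constant $2g_1/\pi$. The only cosmetic difference is that the paper packages the final step as a completion of squares (showing $\dot E_{z_{eq}}-\tfrac{2g_1}{\pi}E_{z_{eq}}\le -\tfrac{g_1}{\pi}\int(\tfrac{\pi}{2}|z|-|z_t|)^2\,dx\le 0$) rather than your Cauchy--Schwarz/Wirtinger/AM--GM sequence, but these are algebraically equivalent rearrangements of the same inequality.
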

\begin{proof}
It is sufficient to show that
\begin{equation*}\begin{array}{l}\label{dEz}
W\bydef\dot E_{z_{eq}}-\frac{2g_1}{ \pi} E_{z_{eq}}\leq 0
\end{array}\end{equation*}
 along (\ref{Wave_nd_z_nl_bound}). Differentiating,
 integrating by parts, taking into account the boundary conditions (that imply $z_x(1,t)=z_t(0,t)=0$) and further applying Wirtinger's inequality we have
 \begin{equation*}
    \begin{array}{l}
     W=\int_0^1[z_x z_{xt}+z_t(z_{xx}+f)]dx-\frac{2g_1}{ \pi} E_{z_{eq}}\\
     \quad=%\color{blue}{+} %-
     \int_0^1 z_tf_1zdx-\frac{2g_1}{ \pi} E_{z_{eq}}\\
     %\quad \leq g_1\int_0^1|z_t||z|dx-\frac{g_1}{ \pi}\int_{0}^1
%        \Big( z_x^2  + z_t^2 \Big)dx\\
      \quad  \leq g_1\int_0^1|z_t||z|dx-\frac{g_1}{ \pi}\int_{0}^1
        \Big( \frac{\pi^2}{ 4}z^2  + z_t^2 \Big)dx\\
       \quad =-\frac{g_1}{ \pi}\int_{0}^1
        \Big( \frac{\pi}{ 2}|z|  - |z_t| \Big)^2dx\leq 0.
    \end{array}
\end{equation*}
\end{proof}
Due to \eqref{1Dbound},
given $d>0$ the solution $z$ of  \eqref{Wave_nd_z_nl_bound} satisfies the bound
\begin{equation}\label{zd0} z^2(x,t) \leq  0.25 {d^2} \quad \forall x \in [0,1] , \ t \in [t_0, t_0 +T] \end{equation}
 if
\begin{equation}\begin{array}{ll}
\label{leq_d1}
\max_{x\in[0,1]} z^2(x,t) \leq \int_{0}^1 \Big[ z_x (x,t)^2  + z_t(x,t)^2 \Big] dx
 \\
 \quad \leq  e^{\frac{2g_1}{ \pi}
(t-t_0)} \int_{\Omega} \Big[ | z_0(x) |^2  + z_1(x)^2 \Big] dx \leq  \frac{d^2}{4}.
\end{array}\end{equation}
In order to bound $e^{(m)}$ and $e^{b(m)}$, we use  Theorem \ref{Prop_ExactObs}.
The LMIs \eqref{Psi1_4}  for $n=1$ are reduced to
\begin{equation}\begin{array}{l} \label{psi1_Reg}
         -k+(1+k^2)\chi<0,\\
            \begin{bmatrix}
           -\chi+\delta+ \lambda_1 \frac{4}{\pi^2 }& 2\delta \chi &  g_1\chi\\
            * & -\chi+\delta & \frac{1}{ 2}g_1\\
            * & * & -\lambda_ 1
            \end{bmatrix}
            \leq 0,
            \end{array}
\end{equation}
where $\chi$ and $\lambda_1$ are positive scalars.
The LMI \eqref{Phi} for $n=1$ has a form
\begin{equation}\label{phi_Reg}
       \begin{bmatrix}
       -\frac{1}{2}[1-e^{-2\delta T^*}]& [1+e^{-2\delta T^*}]\chi \\ %{n-1\over 2}[1+e^{-2\delta T^*}]\chi        \\
            {*}       & -\frac{1}{ 2}[1-e^{-2\delta T^*}]
        \end{bmatrix}<0.
\end{equation}
The LMI \eqref{Phi0}
has a form $\Phi_0>0$, where $2\Phi_0=\begin{bmatrix}{1} & 2\chi\\
{*} & {1}\end{bmatrix}$, leading to
$\alpha=2\lambda_{min}(\Phi_0)$
and $\beta=2\lambda_{max}(\Phi_0)$ in the bounds \eqref{Vn>alpha}.
Hence,
%\begin{equation*}\label{alpha1}
    $\alpha=(1-2\chi)$ and $\beta=(1+2\chi)$.
%\end{equation*}

Similarly to \eqref{leq_d1},
 if the LMIs
\eqref{psi1_Reg} are feasible, then %(due to Theorem \ref{Prop_ExactObs}) the sequence of the estimation errors
% can be bounded as
\begin{equation*}
    \begin{array}{ll}
    \max \Big\{[e^{(m)}(x,t)]^2, [e^{b(m)}(x,t)]^2 \Big \}\leq \frac{d^2}{ 4}
    \end{array}
\end{equation*}
$ \forall x \in [0,1] , \ t \in [t_0, t_0 +T]$ provided (cf. \eqref{thm_ebound})
\vspace{-0.3cm}
\begin{equation}
    \begin{array}{ll}\label{leq_d2}
     max\Big\{\max_{x\in[0,1]}[e^{(m)}(x,t)]^2, \max_{x\in[0,1]}[e^{b(m)}(x,t)]^2 \Big\}
     \\
      \leq \frac{1+2\chi}{1-2\chi} e^{2 \delta T^*}  \int_{0}^1 \Big[  z_{0x}^2(x)   + z_1^2(x) \Big] dx \leq \frac{d^2}{ 4}.
    \end{array}
\end{equation}

    Denote
\begin{equation}\label{d_0}
 % d_0^2 \bydef 0.25{d^2}\cdot min\Big\{ e^{-{2g_1\over \pi}T}
%   , \quad   \frac{1-\chi}{1+\chi }\Big\}.
d_0 \bydef \frac{d}{ 2}\cdot min\Big\{ e^{-\frac{g_1}{ \pi}T}
 , \quad   \sqrt{\frac{1-2\chi}{1+2\chi }}e^{-\delta T^*}\Big\}.
\end{equation}
Then due to   \eqref{leq_d1}
for all solutions of \eqref{Wave_nd_z_nl_bound}
initiated from \eqref{region} the bound \eqref{zd0} holds.
Moreover,
   due to \eqref{leq_d2} for all the resulting $e^{(m)}(x,t)$ and  $e^{b(m)}(x,t)$
   that satisfy \eqref{en} and \eqref{ebn} respectively the implication \eqref{gml} holds:
\begin{equation*}
\begin{array}{lll}
    |z(x,t)+(\theta - 1) e^{(m)}(x,t) |^2 \\
    \quad \leq 2 z^2(x,t) + 2 [e^{(m)}(x,t)]^2 \leq d^2,\\
    |z(x,t)+(\theta - 1) e^{b(m)}(x,t) |^2 \\
    \quad \leq 2 z^2(x,t) + 2 [e^{b(m)}(x,t)]^2 \leq d^2,\\
    \forall t\in[t_0, t_0+T],\ x\in [0,1], \theta\in [0,1].
    \end{array}
\end{equation*}
The latter bounds guarantee \eqref{g1f1}.
Then from Theorem \ref{Prop_ExactObs}
we conclude
 the following:
\begin{corollary} \label{prop_reg_0bs_1d}
    %Regional exact observability for one-dimensional wave equation
    Given $g_1$ and positive tuning parameters $T^*$ and $\delta$ ,
    let there exist positive constants $\chi $ and $\lambda_1$  that satisfy the LMIs
  \eqref{psi1_Reg} and \eqref{phi_Reg}.
    Then for all $T\geq T^*$
    the system \eqref{Wave_nd_z_nl_bound} subject to $f(0,x,t)\equiv 0$ and \eqref{locL} with
     the measurements \eqref{1dmeas} is {\it regionally exactly observable} on $[t_0,t_0+T]$ for all initial conditions from ${\mathcal X}_{d_0}$ given by \eqref{region},
where   $ d_0$ is
  defined by  \eqref{d_0}.
\end{corollary}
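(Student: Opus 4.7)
My plan is to reduce the locally Lipschitz case to the globally Lipschitz case covered by Theorem \ref{Prop_ExactObs} via a truncation/invariance argument.

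First, I extend $f$ outside $|z|\leq d$ to a function $\tilde f(z,x,t)$ that coincides with $f$ on $|z|\leq d$, satisfies $\tilde f(0,x,t)\equiv 0$, and obeys $|\tilde f_z|\leq g_1$ globally. Since the LMIs \eqref{psi1_Reg} and \eqref{phi_Reg} involve only $g_1$, they apply verbatim to the modified wave equation obtained by replacing $f$ by $\tilde f$. Hence Theorem \ref{Prop_ExactObs} (in its $n=1$ specialization) guarantees that the forward/backward iterative algorithm converges and the bounds \eqref{thm_ebound} hold for the modified system. The remaining task is to show that, starting from ${\mathcal X}_{d_0}$ with $d_0$ given by \eqref{d_0}, the modified solutions never leave the region where $\tilde f=f$, so the conclusions transfer to the original equation.

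The invariance verification has two parts. For the true state $z$, I apply Proposition \ref{exp_growth_bound} to bound $E_{z_{eq}}(t)\leq e^{(2g_1/\pi)(t-t_0)}E_{z_{eq}}(t_0)$ on $[t_0,t_0+T]$; combined with Sobolev's inequality \eqref{1Dbound} and the first branch of the minimum in \eqref{d_0}, this yields $z^2(x,t)\leq d^2/4$ uniformly in $x\in[0,1]$ and $t\in[t_0,t_0+T]$. For the errors, I use the energy bound \eqref{thm_ebound} from Theorem \ref{Prop_ExactObs}, with the specialized values $\alpha=1-2\chi$ and $\beta=1+2\chi$ derived from the $n=1$ form of $\Phi_0$; Sobolev's inequality again gives $\max_{x}[e^{(m)}(x,t)]^2,\max_{x}[e^{b(m)}(x,t)]^2\leq \tfrac{1+2\chi}{1-2\chi}e^{2\delta T^*}d_0^2$, and the second branch of \eqref{d_0} forces this to be $\leq d^2/4$.

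Combining the two pointwise bounds by the inequality $(a+b)^2\leq 2a^2+2b^2$ gives, for every $\theta\in[0,1]$,
\begin{equation*}
|z(x,t)+(\theta-1)e^{(m)}(x,t)|^2\leq 2z^2+2[e^{(m)}]^2\leq d^2,
\end{equation*}
and similarly for $e^{b(m)}$. By \eqref{locL}, the integrands defining $g^{(m)}$ and $g^{b(m)}$ in \eqref{gm} are bounded by $g_1$ pointwise, and likewise for $f_1$ in \eqref{f1}. Hence $\tilde f$ and $f$ coincide along all relevant trajectories, so the modified solutions are the original solutions, and the convergence and continuous-dependence conclusions of Theorem \ref{Prop_ExactObs} deliver regional exact observability on ${\mathcal X}_{d_0}$.

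The main obstacle is the apparent circularity: Theorem \ref{Prop_ExactObs} gives the error bound only under a global Lipschitz hypothesis, which is exactly what we need to verify a posteriori. The truncation trick side-steps this by working with the modified equation throughout the proof and then observing that, within the forward-invariant region defined by \eqref{d_0}, the modified and original dynamics are indistinguishable; I expect the bookkeeping of the constants in \eqref{d_0} (coordinating the exponential growth factor $e^{g_1 T/\pi}$ for $z$ with the factor $\sqrt{(1+2\chi)/(1-2\chi)}\,e^{\delta T^*}$ for the errors) to be the only genuinely delicate calculation.
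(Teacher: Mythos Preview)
Your proposal is correct and follows essentially the same route as the paper: bound $|z|$ via Proposition~\ref{exp_growth_bound} together with the Sobolev inequality \eqref{1Dbound}, bound $|e^{(m)}|,|e^{b(m)}|$ via the uniform estimate \eqref{thm_ebound} with the 1-D constants $\alpha=1-2\chi$, $\beta=1+2\chi$, combine both through $(a+b)^2\le 2a^2+2b^2$ to get \eqref{gml}, and then invoke Theorem~\ref{Prop_ExactObs}. The one point where you are actually more careful than the paper is the explicit truncation $\tilde f$: the paper applies Proposition~\ref{exp_growth_bound} and Theorem~\ref{Prop_ExactObs} directly, which on the face of it presupposes the global bound $|f_z|\le g_1$ that is only being verified a posteriori; your device of working with a globally Lipschitz extension and then checking that all relevant trajectories stay in $\{|z|\le d\}$ cleanly removes this circularity, but otherwise the argument and the constants are identical.
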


\begin{remark}\label{rem_reg}
The result on the regional observability
cannot be extended to multi-dimensional case
since the bound \eqref{1Dbound} does not hold in n-D case.
%Due to the inequalities of Lemma \ref{lem_ineq} or due to Cauchy-Schwarz inequality,
One could extend the regional result to  n-D case  if $f$ would depend on
$\int_{\Omega}|\nabla z|^2dx$ or on
$\int_{\Omega}z^2dx$, $\int_{\Gamma_N}z^2d\Gamma$ (by employing the inequalities of Lemma \ref{lem_ineq}).

The global results of Sections \ref{Observers} and  \ref{sec_ExObs} can be extended to more general functions $f=f(z,\nabla z, z_t)$ with uniformly bounded $f_z, |f_{\nabla z}|$ and $f_{z_t}$. Note that in \cite{fridman2013observers} such more general functions were considered for 1-D wave and for beam equations.
However, the regional result in 1-D case
 seems to be not extendable to these more general nonlinearities due
 to difficulties of employing the bound  \eqref{1Dbound} with $z$ replaced
 by $z_x$ or $z_t$.
\end{remark}

\begin{remark}
The result on the regional observability
can be easily extended to 1-D wave equations with variable coefficients
as considered in \cite{fridman2013observers}
$$\begin{array}{ll}
z_{tt}(x,t)={\partial \over
\partial x }[a(x) z_{x}(x,t)]+f(z(x,t), x,t),
 \\ t\geq t_0, \quad x\in [0,1],
\end{array}$$
where $a$ is a $C^1$ function with $a_x\leq 0$ and $a(1)>0$.
This can be done by  modifying Lyapunov and energy functions, where
the square of the partial derivative in $x$
should be multiplied by  $a(x)$.
Note that an extension of forward and backward observers to observability of
 1-D wave equations with non-Lipschitz coefficients
(as studied e.g. in \cite{castro02,Fanelli13}) seems to be problematic.
\end{remark}

\begin{example}
Consider \eqref{Wave_nd_z_nl_bound}
%\begin{eqnarray}\label{Wave_nd_z_nl_bound}
%     z_{tt}(x,t) =z_{xx}(x,t) + f(z,x,t),  &   \qquad x\in [0,1], \ t>0, \\
%   \nonumber  z(0,t)  = 0, \qquad z_x(1,t)=0. &
%\end{eqnarray}
with $f=0.05z^2$. Here %$f_z=0.1z$, i.e.
$|f_z|=|0.1z|\leq g_1$ if $|z|\leq 10 g_1=d$.
 Choose $g_1=0.1$, meaning that \eqref{locL} holds with $d=1$. Also here we use
  the sequence of forward and backward observers \eqref{zn} and \eqref{zbn}
with $k=1$.
  Verifying the feasibility of LMIs
  \eqref{psi1_Reg} and \eqref{phi_Reg} (subject to minimization of $\chi$ that enlarges the resulting $d_0$),
   we find that the system is exactly observable
  in time $T^*=3.78$, where $\delta=0.1$ and $\chi=0.1803$.
  This leads to the estimate  \eqref{region} with $d_0=0.2348$ for the region of
  %where the iterative algorithm of Section \ref{sec_exObs} converges, i.e.
  exact observability, where the initial conditions of the system can be recovered
 uniquely from the measurements on the interval $[0,T]$ for all $T\in[3.78, 23.5]$.
 Note that the convergence of the iterative algorithm is faster for larger $T$ (in the sense that \eqref{lem_enThena} holds with a smaller $q$).
Increasing the nonlinearity twice to $f=0.1z^2$ and choosing $g_1=0.2$, we find $d=1$. The LMIs
  \eqref{psi1_Reg} and \eqref{phi_Reg} are feasible with $\delta=0.09, T^*=5.49$ and
  $\chi=0.2275$. We arrive at a smaller  $d_0=0.1867$, whereas $T\in [5.49, 15.4]$.

Simulations of the initial state recovery in the case of $f=0.1z^2$ and
 $   z_0(x) = z_1(x) = 0.2733 \cdot x(1- \frac{x}{2}) $, where
  $\int_0^1\left [{z_0}^2_x  + z_1^2 \right]dx =0.1867^2$,  show the convergence
 of the iterative algorithm on  the predicted observation
 interval $[0,5.49]$.
 Moreover, the algorithm converges on shorter observation intervals with $ T\geq 2.1 $
  that  illustrates the conservatism of the LMI conditions.
  See Figure \ref{fig1} for the case of $10$ forward and backward iterations with $T=1.8$ (no convergence) and $T=2.1$ (convergence).
 %Due to simulations the convergence is preserved
 %for 10 times larger initial conditions.
The computation times for $10$ iterations for several values of $T$ are given in Table \ref{table:nonlin1}.

\begin{figure}[!ht]
\begin{center}
\includegraphics[height=4.5cm]{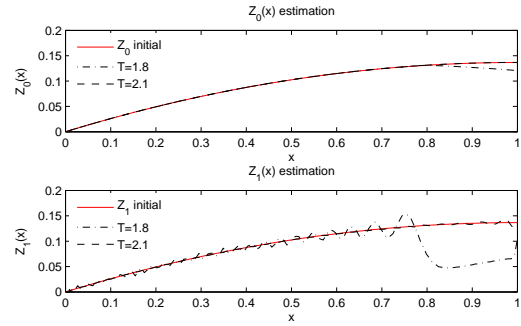}    % The printed column
\caption{Initial condition recovery after $10$ iterations}  % width is 8.4 cm.
\label{fig1}                                 % Size the figures
\end{center}                                 % accordingly.
\end{figure}

%\begin{figure}[!ht]
%\begin{center}
%\includegraphics[height=3.5cm]{comp_time.eps}    % The printed column
%\caption{Computation time for $10$ iterations}  % width is 8.4 cm.
%\label{fig1}                                 % Size the figures
%\end{center}                                 % accordingly.
%\end{figure}
\begin{table}[ht]
\caption{Computation time for $10$ iterations}
\centering
\begin{tabular}{ c  c }
\hline \hline
T & Computation time (sec)  \\ [0.5ex]
\hline
    2.10  & 3.0469           \\
   % 2.50  & 3.5              \\
    3.00  & 3.6875           \\
   % 3.50  & 3.75             \\
    5.00  & 4.2813           \\
    10.00 & 5.9219           \\ [1ex]
\hline
\end{tabular}
\label{table:nonlin1}
\end{table}

\end{example}

\section{Conclusions}
The LMI approach to  observers and initial state recovering of
semilinear N-D wave equations on a hypercube has been presented. In the linear 2-D case our results lead to an upper bound on
the  exact observability time, which is close to the analytical value, but does not recover
it as it happened in 1-D case.
For 1-D systems with locally Lipschitz nonlinearities
we have found a (lower) bound on the region of initial values
that are uniquely recovered from the measurements on the finite interval.
%LMIs are expected to provide effective constructive tools for analysis and
%control of distributed parameter systems completing  the theoretical qualitative results.

%\begin{ack}                               % Place acknowledgements
%Partially supported by the Roman Senate.  % here.
%\end{ack}

\bibliographystyle{plain}        % Include this if you use bibtex
\bibliography{Bibliography161013_02}           % and a bib file to produce the
                                 % bibliography (preferred). The
                                 % correct style is generated by
                                 % Elsevier at the time of printing.

%\appendix
%\section{A summary of Latin grammar}    % Each appendix must have a short title.
%\section{Some Latin vocabulary}         % Sections and subsections are supported
                                        % in the appendices.
\end{document}